\documentclass[a4paper,11pt]{article}

\usepackage{amsthm,amsmath,amssymb,amsfonts,pstricks,pst-node,graphics,graphicx}

\voffset=0mm \hoffset=0mm \oddsidemargin=-5.4mm \topmargin=-1.4mm \headsep=0mm
\headheight=0mm \textwidth=160mm \textheight=235mm \footskip=11mm \parindent=0mm
\parskip=0.5\baselineskip
\usepackage{amsmath,amsfonts,amsthm,relsize}
\usepackage{amsmath,amsfonts,amsthm,amssymb}
\usepackage{graphicx,pstricks}
\newtheorem{theorem}{Theorem}
\numberwithin{theorem}{section}

\newtheorem{example}[theorem]{Example}

\newtheorem{proposition}[theorem]{Proposition}

\theoremstyle{definition}
\newtheorem{definition}[theorem]{Definition}

\newtheorem{remark}[theorem]{Remark}

\def\RP{\mathbb {RP}}
\def\R{\mathbb R}
\def\Z{\mathbb{Z}}

\def\so{\mathfrak{so}}
\def\g{\mathfrak{g}}

\def\HH{\mathcal{H}}
\def\T{\mathcal{T}}
\def\A{\mathcal{A}}

\def\SL{\mathrm{SL}}
\def\GL{\mathrm{GL}}
\def\SO{\mathrm{SO}}
\def\M{\mathcal{M}}

\def\P{\mathcal{P}}

\def\s{\varsigma}

\def\sinr{\sin{\norm{x_r}}}
\def\sinro{\sin{\norm{x_{r+1}}}}
\def\cosr{\cos{\norm{x_r}}}
\def\cosro{\cos{\norm{x_{r+1}}}}

\def\r{{\bf r}}
\def\k{{\bf k}}

\def\h{{\mathfrak h}}

\def\m{{\mathfrak m}}
\def\v{{\bf v}}

\def\g{\mathfrak{g}}

\newcommand\deter[1]{\left|#1\right|}

\def\M{\mathbf{M}}
\newcommand\norm[1]{|\!|#1|\!|}

\begin{document}
\title{Discrete moving frames  and discrete integrable systems}
\author{ Elizabeth Mansfield, Gloria Mar\'i Beffa \& Jing Ping Wang\footnote{This paper is supported by ELM's EPSRC grant EP/H024018/1, GMB's NSF grant DMS \#0804541 and JPW's EPSRC grant EP/I038659/1.}}
\date{}
\maketitle
\begin{abstract} 
Group based moving frames have a wide range of applications, from the classical equivalence problems in differential geometry to more
modern applications such as computer vision.  Here we describe what we call a 
\textit{discrete group based moving frame}, which is essentially a sequence of moving frames with overlapping domains. 
{ We demonstrate a small set of generators of the algebra of invariants, which we call the discrete Maurer--Cartan invariants, for which
there are recursion formulae. We show that this offers significant computational advantages over a single moving frame for our study 
 of discrete integrable systems.}
We demonstrate that the discrete analogues of some curvature flows  lead naturally to Hamiltonian
pairs, which generate integrable differential-difference systems. In particular, 
we show that in the centro-affine plane and the projective space, the Hamiltonian pairs obtained can be transformed into the known
Hamiltonian pairs for the Toda and modified Volterra lattices respectively under  Miura transformations. { We also show that a specified invariant map of polygons in the centro-affine plane
can be transformed to the integrable discretization of the Toda Lattice}. Moreover, we describe in detail the case of 
{discrete flows in} the homogeneous $2$-sphere and we obtain realizations of equations of Volterra type as evolutions of polygons on the sphere.
 \end{abstract}
 
\centerline{\it Dedicated to Peter Olver in celebration of his 60th birthday}
\section{Introduction}
The notion of a moving frame is associated with \'Elie Cartan \cite{Cartan},
who used it to solve equivalence
problems in differential geometry.
Moving frames were further developed and
applied in a substantial body of work, in particular
to differential geometry and (exterior) differential systems,
see for example papers by  Green \cite{Green}
and Griffiths \cite{Griffiths}.
From the point of view of symbolic computation, a breakthrough
in the understanding of Cartan's methods came in a series of papers
by Fels and Olver \cite{refFelOla,refFelOlb},
 Olver \cite{OJS,Ogen}, Hubert \cite{hubertAA,hubertAC,hubertAD}, and Hubert and Kogan \cite{hubertA,hubertB}, 
which provide a coherent, rigorous and constructive moving
frame method free from any particular application, and hence
applicable to a huge
range of examples,  from classical invariant theory 
to numerical schemes.

For the study of differential invariants, one of the
main results of the Fels and Olver papers is the derivation of symbolic formulae
for differential invariants and their invariant differentiation. The book 
\cite{mansfield} contains a detailed exposition of the calculations for the 
resulting symbolic invariant calculus. Applications include the integration of 
Lie group invariant differential equations,  to the Calculus of Variations
and Noether's Theorem,
(see also \cite{kogan, GonMan}), and to integrable systems (\cite{ManKamp, M1, M2, M3}).

The first results for the computation of discrete  invariants using group-based moving frames were given by Olver \cite{OJS} who calls them
joint invariants;
modern applications to date include computer vision  \cite{Ogen} and numerical schemes
for systems with a Lie symmetry \cite{chhay, kimA, kimB, kimC, ManHy}.
While moving frames for discrete applications as formulated
by Olver do give generating sets of discrete invariants, the recursion formulae for differential invariants
 that were so successful for the application
of moving frames to  calculus based applications do not generalize well to these discrete invariants. In particular,
  these generators do not seem to have recursion formulae under the shift operator that are computationally useful.
  To overcome this computational problem, we introduce a \textit{discrete moving frame} which is essentially 
a sequence of frames\footnote {A sequence of moving frames was also used in \cite{kimB} to minimize the accumulation of errors in an invariant numerical method.}, and prove discrete recursion formulae for a small computable generating sets of invariants, which we call the \textit{discrete Maurer--Cartan invariants}.

{ We show that our definitions and constructions arise naturally and are useful for the study of discrete integrable systems. These arise as analogues of curvature flows for polygon evolutions in homogeneous spaces, and this is the focus of the second half of the paper. }
The study of discrete integrable systems is rather new. It began with discretising continuous integrable systems in 1970s.
 The most well known discretization of the Korteweg-de Vries equation (KdV) is 
the Toda lattice~\cite{toda}
\begin{eqnarray}\label{todau}
\frac{{\rm d}^2 u_s}{{\rm d} t^2}=\exp(u_{s-1}-u_s)-\exp(u_s-u_{s+1}) .
\end{eqnarray}
Here the dependent variable $u$ is a function of time $t$ and discrete variable $s\in \mathbb{Z}$.
We can obtain a finite-dimensional version by picking $N\in\mathbb{N}$ and restricting to $1\leq s \leq N$ 
subject to one of two types of boundary conditions: open-end ($u_0=u_{N}=0$) or periodic 
($u_{s+N} = u_s$ for all $s$ and some period $N$).
Using the Flaschka \cite{Flaschka1,Flaschka2} coordinates
\begin{eqnarray*}
 q_s=\frac{{\rm d} u_s}{{\rm d} t}, \qquad p_s=\exp(u_{s}-u_{s+1}),
\end{eqnarray*}
we rewrite the Toda lattice (\ref{todau}) in the form
\begin{eqnarray}\label{todapq}
\frac{{\rm d} p_s}{{\rm d} t}=p_s (q_s -q_{s+1}), \qquad \frac{{\rm d} q_s}{{\rm d} t}=p_{s-1}-p_s .
\end{eqnarray}
Its complete integrability was first established by Flaschka and Manakov \cite{Flaschka1,Flaschka2,manakov}.
They constructed the Lax representation of system (\ref{todapq}) and further solved it by the
inverse scattering method. 

Another famous integrable discretization of the KdV equation is the Volterra lattice~\cite{manakov,kacm}
\begin{eqnarray*}
\frac{{\rm d} q_s}{{\rm d} t} =q_s (q_{s+1}-q_{s-1}).
\end{eqnarray*}
By the Miura transformation $q_s=p_s p_{s-1}$, it is related to the equation
\begin{eqnarray}\label{volt}
\frac{{\rm d} p_s}{{\rm d} t} =p_s^2 (p_{s+1}-p_{s-1}),
\end{eqnarray}
which is the modified Volterra lattice, an integrable discretization of the modified KdV equation. 

Since the establishment of their integrability,  a great deal of work has been contributed to the study of their other
integrable properties including  Hamiltonian structures, higher symmetry flows and 
 $r$-matrix structures, as well as to the establishment of integrability for other systems 
and further discretising differential-difference integrable systems to obtain integrable maps. 
{ For example, the time discretisation of the Toda lattice (\ref{todapq}) leads to the integrable map
$(p,q)\mapsto (\tilde p, \tilde q)$ defined by 
\begin{eqnarray}\label{todamap}
\tilde p_s=p_s \frac{\beta_{s+1}}{\beta_s}, \qquad 
\tilde q_s=q_s+c \left(\frac{p_s}{\beta_s}-\frac{p_{s-1}}{\beta_{s-1}} \right),
\end{eqnarray}
where $c\in \R$ is constant and 
the function $\beta$ is given by the recurrent relation\\ $$\beta_s=1+c q_s -c^2 \frac{p_{s-1}}{\beta_{s-1}}.$$}
Some historical background about the development of the theory of discrete integrable systems 
can be found in \cite{suris03}. Some classification results for such integrable systems including 
the Toda and Volterra lattice were obtained by the symmetry approach \cite{Yami}.

In this paper, {we introduce the concept of discrete moving frames, and under conditions which are satisfied for the range of examples we study, we prove  theorems analogous to the classical results of the continuous case: generating properties of Maurer--Cartan invariants, a replacement rule, recursion formulas, and general formulas for invariant evolutions of polygons. Once the ground work is in place, we study the evolution induced on the Maurer--Cartan invariants by invariant evolutions of $N$-gons, the so-called {\it invariantizations}. 
We consider the resulting equations to be defined on infinite lattices, i.e.
$s\in \mathbb{Z}$ for both $N$-periodic and non-periodic cases. 
We will show that the invariantization of certain} time evolutions of $N$-gons (or so-called {\it twisted $N$-gons} in the periodic case) in the centro-affine plane and the projective 
line $\RP^1$ naturally lead to Hamiltonian pairs. Under the Miura transformations, we can transform the Hamiltonian pairs into the known
Hamiltonian pairs for the Toda lattice (\ref{todapq}) \cite{A} and modified Volterra lattice (\ref{volt}) \cite{kp85} 
respectively. 
{
  We also show that a specified invariant map of polygons in the centro-affine plane naturally
leads to the integrable map (\ref{todamap}) for the time discretisation of the Toda Lattice.}
{ We will analyze in detail the case of the $2$-homogeneous sphere. We will use normalization equations to obtain Maurer--Cartan invariants and we will prove that they are the classical discrete arc-lengths (the length of the arcs joining vertices) and the discrete curvatures ($\pi$ minus the angle between two consecutive sides of the polygon). We will then write the general formula for invariant evolutions of polygons on the sphere and their invariantizations. We finally find an evolution of polygons whose invariantization is a completely integrable evolution of Volterra type.}

The arrangement of the paper is as follows:
In Section 2 we introduce moving frames and the definitions and calculations we will discretize. In Section 3 we introduce discrete moving frames and
discrete Maurer--Cartan invariants and prove the main Theorems of the first part of the paper concerning these invariants. In Section 4 we begin our study of Lie group invariant discrete evolutions of $N$-gons
in a homogeneous space.  We show the discrete moving frame yields an effective and straightforward reduction or invariantization of the evolution,
to produce discrete analogues of curvature flows. In Section 5 we describe discrete invariant evolutions and demonstrate their integrability. In Sections 5.1 and 5.2 we study the centro-affine and the projective cases, including the completely integrable evolutions of polygons and 
their associated biHamiltonian pair. 
We also briefly describe invariant maps and their invariantizations and  demonstrate that the invariant map in the centro-affine plane
leads to the integrable discretization of the Toda lattice (\ref{todapq}). However,
we leave their thorough study, including a geometric interpretation of integrable maps and more examples of maps with a biPoisson invariantization, for a later paper. Finally, Sections 5.3 describes the more involved example, that of discrete evolutions on the homogeneous sphere.
We conclude with indications of future work.

\section{Background and definitions} 
We assume a smooth Lie group action on a manifold $\M$ given by $G\times \M\rightarrow \M$. { In the significant examples we discuss
in the later sections,  $\M$ will be the set of $N$-gons in a homogeneous space, that is, $N$-gons in $G/H$ where $H$ is a  closed Lie subgroup, or more generally 
$\M = \left(G/H\right)^N$ with the standard action.} { But in what follows this restriction does not need to be in place}.

\subsection{Moving Frames}\label{moving}

We begin with an action of a Lie Group $G$ on a manifold $\M$.

\begin{definition}\label{leftright}
 A \emph{group action} of $G$ on $\M$  is a map $G\times \M\rightarrow \M$, written as $(g,z)\mapsto g\cdot z$, which satisfies either $g\cdot (h\cdot z)=(gh)\cdot z$, called a \emph{left action}, or $g\cdot (h\cdot z)=(hg)\cdot z$, called a \emph{right action}. 
 
 \end{definition} 
 
 \begin{definition}[Invariants] Given a smooth Lie group action $G\times \M\rightarrow \M$, a function $I:\M\to \R$ is an {\em invariant} of the action if $I(g\cdot z) = I(z)$ for any $g\in G$ and any $z\in \M$.
\end{definition}
The set of invariants form an algebra; here we consider local invariants and they will typically be locally smooth. When we talk about generators of the algebra we are referring to functional generators.

Let us write $g\cdot z$ as $\widetilde{z}$ to ease the exposition in places.
Further, we assume the action is {\em free\/} and {\em regular\/} in some domain $\Omega\subset \M$, which means, in effect, that for every $x\in\Omega$ there is a neighbourhood $\mathcal{U}\subset\Omega$ of $x$ such that:
 
\begin{enumerate}
\item
the intersection of the orbits with  $\mathcal{U}$ have the dimension of the group $G$ and further foliate $\mathcal{U}$;
\item
there exists a submanifold $\mathcal{K}\subset \mathcal{U}$ that intersects the orbits of $\mathcal{U}$ transversally, and the intersection of an orbit of $\mathcal{U}$ with $\mathcal{K}$ is a single point. This submanifold $\mathcal{K}$ is known as the \emph{cross-section} and has dimension equal to $\mbox{dim}(\M)-\mbox{dim}(G)$;
\item
if we let $\mathcal{O}(z)$ denote the orbit through $z$, then the element $h \in G$ that takes $z\in \mathcal{U}$ to $k$, where $\{k\}=\mathcal{O}(z)\cap \mathcal{K}$, is unique.
\end{enumerate}

\begin{figure}[tbh]
\centering
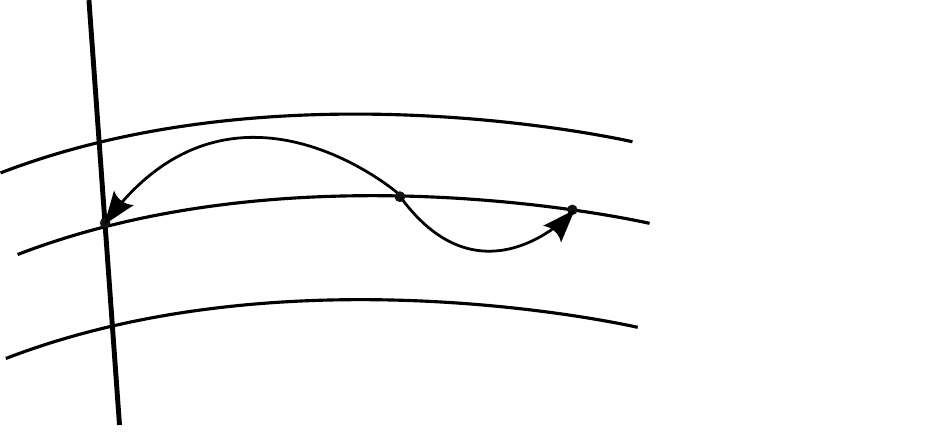
\caption{\label{MMWfig1} The definition of a right moving frame for a free and regular group action. It can be seen that
 $\rho(g\cdot z)= 
\rho(z)g^{-1}$ (for a left action). A left moving frame is obtained by taking the inverse of $\rho(z)$.}
\end{figure}

Under these conditions,  we can make the following definitions.
\begin{definition}[Moving frame]
Given a smooth Lie group action $G\times \M\rightarrow \M$, 
a {\em moving frame} is an equivariant map $\rho:\mathcal{U}\subset \M \rightarrow G$. We say $\mathcal{U}$ is the domain of the frame. 
\end{definition}
Given a cross-section $\mathcal{K}$ to the orbits of a free and regular action, we can define the map $\rho:\mathcal{U} \rightarrow G$ such that $\rho(z)$ is the unique element in $G$ which satisfies
$$\rho(z)\cdot z=k,\qquad \{k\}=\mathcal{O}(z)\cap \mathcal{K},$$
see Figure \ref{MMWfig1}.
We say $\rho$ is the \emph{right moving frame} relative to the cross-section $\mathcal{K}$, { and $\mathcal{K}$ provides the \textit{normalization} of $\rho$}. {This process is familiar to many readers: it is well known that if we translate a planar curve so that a point $p$ in the curve is moved to the origin, and we rotate it so that the curve is tangent to the $x$-axis, the second term in the Taylor expansion at $p$ is the Euclidean curvature at $p$. The element of the Euclidean group taking the curve to its normalization is indeed a right moving frame.}

By construction, we have for a left action and a right moving frame that $\rho(g\cdot z)=\rho(z)g^{-1}$  
so that $\rho$ is indeed equivariant.  A \textit{left} moving frame is the inverse of a right moving frame, so for a left action and a left moving frame, the equivariance is $\rho(g\cdot z) = g \rho(z)$.
The cross-section $\mathcal{K}$ is not unique, and is usually selected to simplify the calculations for a given application. 
Typically, moving frames exist only locally, that is, in some open domain in $M$. 
In what follows, we conflate $\mathcal{U}$ with $M$ to ease the exposition. In applications however, the choice of domain may be critical.

In practice, the procedure to find a right moving frame is as follows:

\begin{enumerate}
\item
define the cross-section $\mathcal{K}$ to be the locus of the set of equations $\psi_i(z)=0$, for $i=1,...,r$, where $r$ is the dimension of the group $G$;
\item
find the group element in $G$ which maps $z$ to $k \in \mathcal{K}$ by solving the \emph{normalization equations},
$$\psi_i(\widetilde{z})=\psi_i(g\cdot z)=0,\qquad i=1,...,r.$$
\end{enumerate}
Hence, the frame $\rho$ satisfies $\psi_i(\rho(z)\cdot z)=0$, $i=1,...,r$.\smallskip

\begin{remark} In practice, both left and right actions occur naturally, as do left and right moving frames, and the calculations can be considerably easier in one handedness than the other, so both occur in the examples. However, since the handedness can be changed by
taking inverses, there is no need to consider both in a theoretical development. We will usually work with  right and left moving frames for a left action.      \end{remark}
In what follows we will use the terms right and left frames to designate left and right moving frames since this creates no confusion.
Invariants of the group action are easily obtained. 
\begin{definition}[Normalized invariants]
Given a left or right action $G\times \M\rightarrow \M$ and a
 right  frame $\rho$, the normalized invariants are the coordinates of $I(z)= \rho(z)\cdot z$.
\end{definition}
Indeed, for a left action we have $$  I(g\cdot z)=\rho(g\cdot z)\cdot g\cdot z= \rho(z) g^{-1} g \cdot z = \rho(z)\cdot z = I(z).$$

The normalized invariants are important because any invariant can be written in terms of them; this follows from the following:
\begin{theorem}[Replacement Rule]\label{Reprule}
If $F(z)$ is an invariant of the action $G\times \M\rightarrow \M$, and $I(z)$ is the normalized invariant for a moving frame $\rho$ on $\M$, then
$F(z)=F(I(z))$.
\end{theorem}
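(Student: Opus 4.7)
The plan is to observe that this is essentially a one-line consequence of the defining equivariance/invariance, and that the construction of the normalized invariants plugs directly into the invariance of $F$. So there is really only one step to carry out, and the only thing to decide is which handedness (left action vs. right moving frame) to present it in — I will follow the convention set up just before the theorem, namely a left action with a right moving frame, which is what the preceding displayed computation for $I(g\cdot z)=I(z)$ already uses.

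First I would recall the definition: the normalized invariant is $I(z)=\rho(z)\cdot z$. Next, I would apply the hypothesis that $F$ is an invariant, i.e.\ $F(h\cdot z)=F(z)$ for every $h\in G$ and every $z\in\M$, to the particular group element $h=\rho(z)\in G$. This immediately gives
$$F(I(z)) \;=\; F(\rho(z)\cdot z) \;=\; F(z),$$
which is the claim. I would then add a brief remark that, although $I(z)\in\M$ has the same number of coordinates as $z$, exactly $r=\dim G$ of them are constants (the ones pinned down by the normalization equations $\psi_i=0$ defining the cross-section $\mathcal{K}$), so the content of the theorem is that every invariant is a function of only the remaining $\dim(\M)-\dim(G)$ coordinates of $I(z)$; this explains why the normalized invariants form a generating set.

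There is no real obstacle: the work has already been absorbed into the construction of $\rho$ and $I$. For the analogous statement in the right-action / left-frame case one simply replaces $\rho(z)$ by $\rho(z)^{-1}$ in the computation, so no separate proof is needed. I would not include this variant in the main proof, only mention it in passing if space permits.
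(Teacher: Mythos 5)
Your proof is correct and is essentially identical to the paper's: both simply specialize the invariance $F(g\cdot z)=F(z)$ to $g=\rho(z)$ and invoke the definition $I(z)=\rho(z)\cdot z$. The extra remark about $\dim G$ of the coordinates of $I(z)$ being normalization constants is a nice observation but not needed for the statement as given.
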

The theorem is proved by noting that (for a right moving frame)  $F(z)=F(g\cdot z)=F(\rho(z)\cdot z)=F(I(z))$
where the first equality holds for all $g\in G$ as $F$ is invariant, the second by virtue of setting $g=\rho(z)$, and the third
by the definition of $I(z)$.

The Replacement Rule shows that the normalized invariants form a set of generators for the algebra of invariants. Further,
if we know a sufficient number of invariants, for example,  they may be known historically or through physical considerations,
then the Replacement Rule allows us to calculate the normalized invariants without knowing the frame.

We can apply this theory to \textit{product (also called diagonal) actions}, the action with which our paper is concerned. This move has its shortcomings, as we see next.

Given a Lie group action
$ G\times M \rightarrow M$, $(g,z)\mapsto g \cdot z$, the product action is
$$ G\times (M\times M\times \cdots \times M),\qquad (g,(z_1, z_2, \dots , z_N))\mapsto (g\cdot z_1, g\cdot z_2, \dots, g\cdot z_N).$$
In this case $\M = M^N = M\times\dots \times M$ and the normalized invariants are {  the invariantized components}\footnote{In the literature the invariantization of a component $z_i$ is sometimes denoted as $I(z_i)$,  $\iota(z_i)$ or $\bar{\iota}(z_i)$, here we use $I_i(z)$} { of  $I(z)$; we set} $I(z) ={ (I_1(z), \dots, I_N(z))}$, and the Replacement Rule
for the invariant
$F(z_1, z_2, \dots, z_N)$  has the form:
$$
F(z_1, z_2, \dots, z_N)
{ =F(I_1(z), I_2(z), \dots, I_N(z))}
$$
{ These were called {\it joint invariants} in \cite{OJS}.} 
\begin{remark} From now on the manifold where $G$ acts will be the product $\M = M\times M\dots \times M$, hence, questions like freedom of the action, etc, will refer to the diagonal action on the product. We will also assume that the number of copies $N$ of the manifold $M$ is high  enough to guarantee that the action is free. In fact, there is a minimal number that achieves this (called the {\it stabilization order}), the proof of this result can be found in \cite{boutin}.
\end{remark}
The following simple scaling and translation group action on $\mathbb{R}$ will be developed as our main expository example, as the calculations are
easily seen.
\vskip 1ex
\begin{example}\label{mainexposex}\end{example} Let $G=\mathbb{R}^{+}\ltimes\mathbb{R}=\{ \left(\lambda, a\right)|, \lambda >0, a\in \mathbb{R}\}$
act on $M=\mathbb{R}$ as
$$ z\mapsto \lambda z + a.$$ The product action is then given by $z_n \mapsto \lambda z_n + a$ for  all $n = 1,\dots, N$.
We write this action as a left linear action as
$$\left(\begin{array}{cccc} z_1 & z_2 & \cdots & z_N\\1&1&\cdots &1\end{array}\right)\mapsto
\left(\begin{array}{cc}\lambda & a\\0&1\end{array}\right) \left(\begin{array}{cccc} z_1 & z_2 & \cdots & z_N\\1&1&\cdots &1\end{array}\right).$$
There are two group parameters and so we need two independent normalization equations. We may set
$$ g\cdot z_1=0,\qquad g\cdot z_2 =1$$ which are solvable if $z_1\ne z_2$, and this then defines the domain of this frame. In matrix form the right frame is,
$$\rho(z_1, z_2, \cdots, z_N)=\left(\begin{array}{cc}-\displaystyle\frac1{z_1-z_2}& \displaystyle\frac{z_1}{z_1-z_2}\\0&1\end{array}\right).$$
Equivariance is easily shown. For $h=h(\mu, b)$:
$$\begin{array}{rcl}\rho(h \cdot z_1, h\cdot z_2, \dots, h\cdot z_N)&=&
\left(\begin{array}{cc}-\displaystyle\frac1{\mu(z_1-z_2)}& \displaystyle\frac{\mu z_1+b}{\mu(z_1-z_2)}\\0&1\end{array}\right)\\[20pt]
&=&\left(\begin{array}{cc}-\displaystyle\frac1{z_1-z_2}& \displaystyle\frac{z_1}{z_1-z_2}\\0&1\end{array}\right)
\left(\begin{array}{cc} \displaystyle\frac1{\mu} & -\displaystyle\frac{b}{\mu}\\0&1\end{array}\right)\\[20pt]
&=& \rho(z_1,z_2,\dots, z_N) h(\mu,b)^{-1}
\end{array}$$
The normalized invariants are then the components of
$$\begin{array}{l}
\left(\begin{array}{cc}-\displaystyle\frac1{z_1-z_2}& \displaystyle\frac{z_1}{z_1-z_2}\\0&1\end{array}\right)
\left(\begin{array}{ccccc} z_1 & z_2 & z_3&\cdots & z_N\\1&1&1&\cdots &1\end{array}\right)\\[20pt]
\ = \left(\begin{array}{ccccc} { I_1(z) }&{ I_2(z)} &{ I_3(z)}& \cdots &{ I_N(z)}\\1&1&1&\cdots &1\end{array}\right)\\[20pt]
\ = \left(\begin{array}{ccccc} 0 & 1 &\displaystyle\frac{z_3-z_1}{z_2-z_1}& \cdots  &\displaystyle\frac{z_N-z_1}{z_2-z_1}\\1&1&1&\cdots &1\end{array}\right),
\end{array}$$
noting that ${ I_1(z)=0}$ and ${ I_2(z)=1}$ are the normalization equations.
Any other invariant can be written in terms of these by the Replacement Rule. For example, it is easily verified that the invariant
$$\left(\frac{z_5-z_4}{z_N-z_6}\right)^2 = \left({ \frac{I_5(z)-I_4(z)}{I_N(z)-I_6(z)}}\right)^2
=\left(\frac{\displaystyle\frac{z_5-z_1}{z_2-z_1}-\displaystyle\frac{z_4-z_1}{z_2-z_1}}{\displaystyle\frac{z_N-z_1}{z_2-z_1}-\displaystyle\frac{z_6-z_1}{z_2-z_1}}\right)^2.$$

The above example shows both the power and the limits of the moving frame. 
In applications involving discrete systems, use of the shift operator $\T$ taking $z_n$ to $z_{n+1}$ is central to the calculations and formulae involved. However, it can be seen in the above example that
${  \T(I_n(z))\ne I_{n+1}(z)}$ and that instead the Replacement Rule will lead to expressions for ${ \T(I_n(z))}$ that are as complicated in their new formulae as in the original co-ordinates. By contrast, our discrete moving frame theory yields a single expression which, together with its shifts, generates all invariants.

{ As motivation} for our discrete moving frame, the next example will show how an \textit{indexed sequence} of moving frames arises naturally in a discrete variational problem. This sequence of frames arises when Noether's conservation laws for a discrete variational problem with a Lie group symmetry, is written, as far as possible, in terms of invariants. As for the smooth case, the non-invariant part turns out to be equivariant with respect to the group action, and so  by definition forms a moving frame, indexed by $n$ in this discrete case. These frames are not constructed by normalisation equations and the calculation of the conservation laws makes no use of the concepts of moving frame theory. Nevertheless, it is fascinating that frames appear, calculated ``for free" by the 
formulae for the laws.
\begin{example}\label{disvarScalTransDisFrame}\end{example} Consider the discrete variational problem
$$\mathcal{L}[\mathbf{z}]=\sum L_n(\mathbf{z})=\sum\frac12 J_{n}^2=\sum \frac12 \left(\displaystyle\frac{z_{n+2}-z_{n+1}}{z_{n+1}-z_n}\right)^2$$
for which it is desired to find { the sequence $(z_n)$ which minimises $\mathcal{L}[\mathbf{z}]$, possibly subject to certain boundary conditions.} 
We note that $J_{n+k}=(z_{n+k+2}-z_{n+k+1})/(z_{n+k+1}-z_{n+k})$ is invariant under the scaling and translation action, $z_n\mapsto \lambda z_n + a$ (see Example \ref{mainexposex})
and that $\T(J_{n+k})=J_{n+k+1}$  where $\T$ is the shift operator.
The discrete Euler Lagrange equation for a second order  Lagrangian in a single discrete variable is  (cf.\ \cite{HyMan})
$$0=\frac{\partial L_n}{\partial z_n} + \T^{-1}\frac{\partial L_{n}}{\partial z_{n+1}}+ \T^{-2}\frac{\partial L_{n}}{\partial z_{n+2}}$$
which { for this example can, after some calculation,}  be written as
$$0=J_{n+1}^2-J_{n}^2-J_{n}^3+J_{n}J_{n-1}^2$$
 after shifting to balance the indices.
The Lie group invariance of the summand $L_n$ implies there are two conservation laws (in this case, first integrals) arising from 
the discrete analogue of Noether's Theorem \cite{HerHick,HyMan}. { The general formulae for these are complicated and we do not record them here.} The
two first integrals can be written in matrix form as
$$\left(\begin{array}{c} c_1 \\ c_2\end{array}\right) = A_n(\mathbf{z}) \mathbf{v}_n(J)=\left(\begin{array}{cc} 
-\displaystyle\frac{1}{z_{n+1}-z_{n}}&0 \\[20pt]-\displaystyle\frac{z_{n+1}}{z_{n+1}-z_{n}} &1 \end{array}\right)
\left(\begin{array}{c} \displaystyle\frac{J_{n+1}^2-J_{n}^2}{J_{n}}  \\[20pt] J_{n}^2 \end{array}\right),$$
where this defines the matrix $A_n(\mathbf{z})$ and the vector of invariants $ \mathbf{v}_n(J)$, and the $c_i$ are the constants of integration.
It can be seen that in this example, the Euler Lagrange equation gives a recurrence relation for $J_n$ and that once this is solved,
the conservation laws yield (if $c_1\ne 0$),
$$ z_{n+1} = \frac{c_2 - J_{n}^2}{c_1 }.$$

The matrix $A_n(\mathbf{z})$ is equivariant under the group action, for each $n$, indeed
$$  A_n(g\cdot\mathbf{z})=
\left(\begin{array}{cc} 
-\displaystyle\frac{1}{\mu(z_{n+1}-z_{n})}&0 \\[15pt]-\displaystyle\frac{z_{n+1}+b/\mu}{z_{n+1}-z_{n}} &1 \end{array}\right)
=\left(\begin{array}{cc} \displaystyle\frac{1}{\mu}&0 \\[15pt]\displaystyle\frac{b}{\mu} & 1\end{array}\right)
\left(\begin{array}{cc} 
-\displaystyle\frac{1}{z_{n+1}-z_{n}}&0 \\[15pt]-\displaystyle\frac{z_{n+1}}{z_{n+1}-z_{n}} &1 \end{array}\right)
$$
and hence each $A_n(\mathbf{z})$ is a left moving frame for the group $G=\mathbb{R}^{+}\ltimes\mathbb{R}$, 
albeit for a different representation
for this group. In fact, this is the Adjoint representation of $G$, placing this result in line with Theorems on the equivariance
of Noether's conservation laws for smooth systems \cite{GonMan, mansfield}.

\begin{figure}[tbh]
\centering
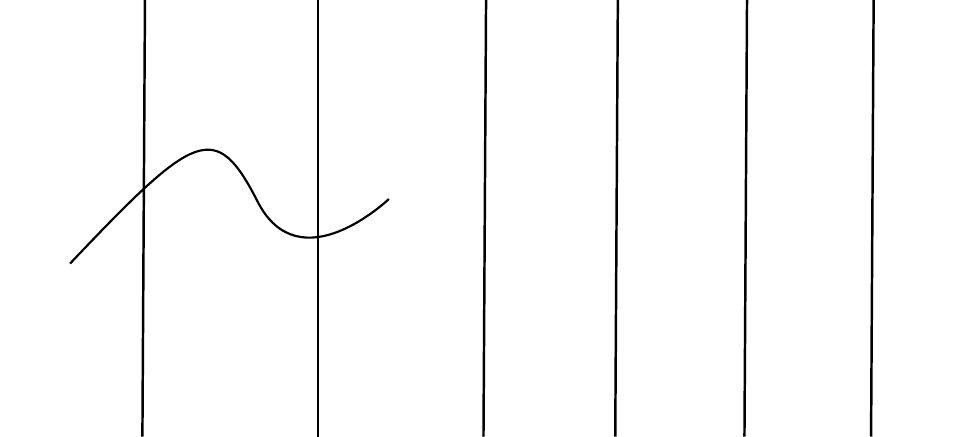
\caption{\label{MMWfig2} The location of the arguments occurring in the normalization equations  for a moving frame for the product action on $M\times M\times \cdots\times M$, shown here as
disjoint copies of $M$ for clarity, for the Example \ref{mainexposex}.}
\end{figure}

\begin{figure}[tbh]
\centering
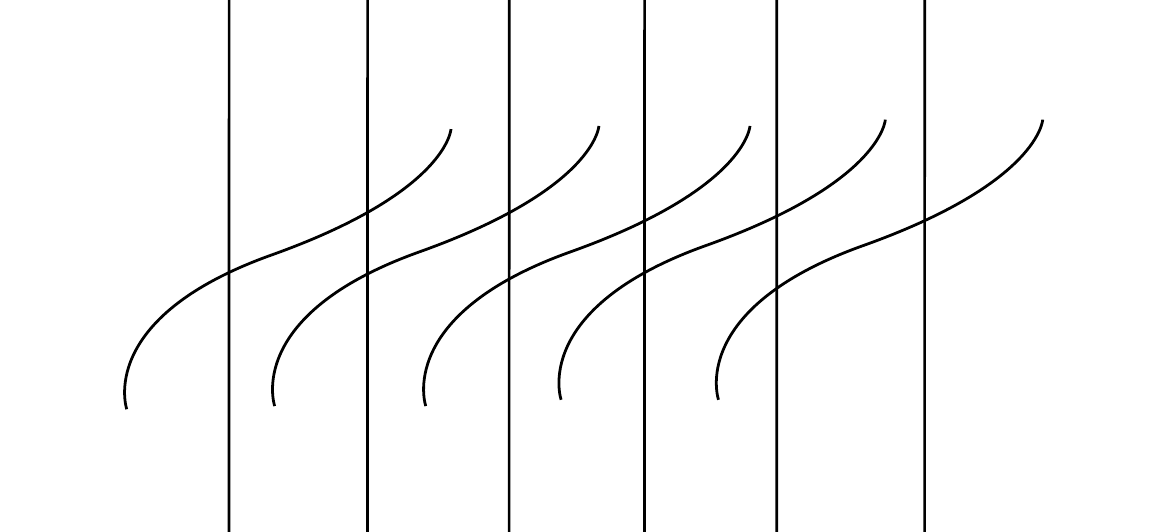
\caption{\label{MMWfig3} The sequence of moving frames for the product action on $M\times M\times \cdots\times M$ in Example 
\ref{disvarScalTransDisFrame}, shown here as
disjoint copies of $M$ so the location of the arguments occurring in the normalization equations can be easily seen. 
For this example, each $\mathcal{K}_i$ is a shift of the previous. This sequence of moving frames is an example of a \textit{discrete moving frame}.}
\end{figure}

In Figure \ref{MMWfig2} is shown, schematically, the location of the arguments occurring in the normalization equations for the frame of 
Example \ref{mainexposex}. By contrast, in Figure \ref{MMWfig3} is shown, schematically, the location of the arguments occurring in the normalization equations for the sequence of frames calculated in Example \ref{disvarScalTransDisFrame}. In fact, for that example, each of the $\mathcal{K}_i$ will be shifts of each other.

In the next section, we will define a discrete moving frame to be a sequence of moving frames with a nontrivial intersection of domains, and will 
explore the properties of the structure of the algebra of invariants that arises. 

\section{Discrete moving frames and the example of twisted $N$-gons in a homogeneous manifold $G/H$}
\subsection{Discrete moving frames and discrete invariants}
In this section we will state the definition of discrete group-based moving frame along $N$-gons and show that a parallel normalization process will produce right discrete moving frames. A discrete moving frame gives many sets of generators for the algebra of invariants under the action of the group. Under conditions that are satisfied in our examples, we will show
recursion relations between these and find a small, useful set of generators which we will denote as \textit{discrete Maurer--Cartan invariants}. 


Our next definition represents the discrete analog of the group-based moving frame we described in Section 2. 

\begin{definition}[Discrete moving frame] Let $G^{N}$ denote the Cartesian product of $N$ copies of the group $G$. Allow $G$ to act on the left on $G^{N}$ using the diagonal action
\(
g\cdot (g_r) = (gg_r).
\)
We also consider what we call the ``right inverse action"  $g\cdot (g_r) = (g_r g^{-1})$. (We note these are both {left} actions according to Definition (\ref{leftright}).)
We say a map 
\[
\rho:M^N \to G^N
\] 
is a left (resp. right)  {\it discrete moving frame} if $\rho$ is equivariant with respect to the { diagonal} action of $G$ on $M^N$ and the left (resp. right inverse) { diagonal} action of $G$ on $G^N$. Since $\rho((x_r))\in G^N$, we will denote by $\rho_s$ its $s$th component, that is $\rho = (\rho_s)$, where $\rho_s((x_r)) \in G$ for all $s$. Equivariance  means,
\[
\rho_s(g\cdot (x_r)) = \rho_s((g\cdot x_r)) = g\rho_s((x_r)) \hskip 2ex(\mathrm{resp.} \hskip 1ex\rho_s((x_r))g^{-1})
\]
for every $s$. Clearly, if $\rho = (\rho_s)$ is a left moving frame, then $\widehat \rho = (\rho_s^{-1})$ is a right moving frame.
\end{definition}

 \begin{definition} Let $F:M^N \to \R$ be a function defined on $N$-gons. We say that $F$ is a {\it discrete invariant} if
 \begin{equation}\label{invdef}
 F((g\cdot x_r)) = F((x_r))
 \end{equation}
 for any $g\in G$ and any $(x_r)\in M^N$.
 \end{definition}
 Notice that the quantities $\rho_s((x_r)) \cdot x_k = I_k^s$ are always invariant as we can readily see from
 $g\cdot  I_k^s=\rho_s(g\cdot(x_r)) \cdot \left(g\cdot x_k\right)=\rho_s((x_r))  g^{-1}g\cdot x_k=\rho_s((x_r)) \cdot x_k= I_k^s$.
 \begin{proposition}[$s$-Replacement rule]\label{repruledis} If $\rho$ is a right moving frame, and $F((x_r))$ is any invariant, then
 \begin{equation}\label{subst}
F((x_r)) = F((\rho_s\cdot x_r))
\end{equation}
 Further,
  the invariants $\rho_s\cdot x_r = I_r^s$ {\bf with $s$ fixed}, $r = 1,\dots N$ generate all other discrete invariants. We call the  $I_r^s$ with $s$ fixed,  
  {\em the $s$-{basic} invariants}.
 \end{proposition}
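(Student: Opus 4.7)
The plan is to mirror the proof of the classical Replacement Rule (Theorem~\ref{Reprule}) almost verbatim, exploiting the equivariance of a \emph{single} component $\rho_s$ of the discrete moving frame rather than a frame for the whole product action. The argument splits into two pieces: the substitution identity (\ref{subst}), and the generating property.

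For (\ref{subst}): fix any $s \in \{1,\dots,N\}$. By the definition of a discrete invariant, $F((x_r)) = F((g \cdot x_r))$ holds for every $g \in G$. The element $\rho_s((x_r)) \in G$ is a specific group element attached to the $N$-gon, so I may specialize $g = \rho_s((x_r))$ to obtain
\[
F((x_r)) = F((\rho_s((x_r)) \cdot x_r)) = F((I_r^s)),
\]
which is exactly (\ref{subst}). Before doing this I should point out that each $I_r^s = \rho_s((x_r)) \cdot x_r$ is itself an invariant, so that the right-hand side is well-defined as a function on orbits; this is precisely the one-line calculation recorded in the paragraph just before the proposition, which uses the right-frame equivariance $\rho_s(g\cdot(x_r)) = \rho_s((x_r))g^{-1}$ together with the left action: $g\cdot I_r^s = \rho_s((x_r))g^{-1} \cdot (g \cdot x_r) = \rho_s((x_r)) \cdot x_r = I_r^s$.

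For the generating claim, identity (\ref{subst}) already does all the work: it exhibits an arbitrary invariant $F$ explicitly as a function of the $N$ quantities $I_1^s,\dots,I_N^s$, so these $s$-basic invariants generate the algebra of discrete invariants for each fixed $s$. I do not anticipate any real obstacle here; the proposition is the natural discrete-product analogue of Theorem~\ref{Reprule}, and the proof is a direct substitution. The only subtlety worth flagging is that the result in fact produces $N$ different generating sets $\{I_r^s\}_{r=1}^N$, one for each $s$, so that comparing them across $s$ — the content of the recursion formulas for the Maurer--Cartan invariants introduced next — is where genuine work will be required, but the present statement itself needs nothing beyond the one-line equivariance computation above.
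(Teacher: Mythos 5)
Your argument is exactly the paper's: both specialize $g=\rho_s((x_r))$ in the defining identity of a discrete invariant to obtain (\ref{subst}), and both read off the generating property directly from that substitution, noting that each $\rho_s$ is itself a moving frame. The proposal is correct and takes essentially the same approach.
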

 \begin{proof} This is the same as Theorem \ref{Reprule} for each $s$. From (\ref{invdef}) we need only choose $g = \rho_s$ in (\ref{invdef}). 
 In particular,  Equation (\ref{subst}) means $F$ can be written as a function of the $s$-basic invariants: one 
 merely needs to substitute $x_r$ by $\rho_s\cdot x_r$ for $s$ fixed. { For those familiar with moving frames, this proposition follows directly from the fact that each $\rho_s$ is a moving frame and $\rho$ is a collection of $N$ moving frames.}
\end{proof}
Locally, a discrete moving frame is uniquely determined by the choice of cross-sections $\mathcal{K}_s$ to the group orbit through $(x_r)$, as in figure \ref{MMWfig3}. Often these sections are shifts of a first section, but that need not be the case in general. The right moving frame component $\rho_s$  is the unique element of the group that takes $(x_r)$ to the cross section $\mathcal{K}_s$. Notice that $\rho_s$ will depend, in general, on several different coordinates, not just $x_s$.

\begin{theorem} Let $\mathcal{K}$ be $N$ local choices $\mathcal{K}_1, \dots \mathcal{K}_N$ of cross-sections to the orbit of $G$ through $(x_r)$. Let $\rho = (\rho_s)\in G$ be uniquely determined by the condition
\begin{equation}\label{sectioneq}
\rho_s\cdot (x_r) \in \mathcal{K}_s, 
\end{equation}
for any $s$. Then $\rho = (\rho_s((x_r)))$ is a right moving frame along the $N$-gon $(x_r)$. 
\end{theorem}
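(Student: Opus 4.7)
The plan is to mimic the standard proof from the continuous moving-frame setting, performed componentwise. Fix an index $s$. By hypothesis, equation (\ref{sectioneq}) uniquely determines $\rho_s((x_r))$ as the single element of $G$ which sends $(x_r)$ into the chosen cross-section $\mathcal{K}_s$. What has to be verified is the right-inverse equivariance
\[
\rho_s(g\cdot (x_r)) = \rho_s((x_r))\, g^{-1}
\]
for every $g\in G$ and every $s$.

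First I would apply the defining property (\ref{sectioneq}) to the transformed $N$-gon $g\cdot (x_r)$: by definition $\rho_s(g\cdot (x_r))$ is the unique element $h\in G$ satisfying $h\cdot(g\cdot (x_r)) \in \mathcal{K}_s$. I will then produce a candidate for $h$ using the frame on the untransformed gon, namely $h := \rho_s((x_r))\, g^{-1}$. Using that the $G$-action on $M^N$ is a left action (here in the sense of Definition \ref{leftright}, so that $(ab)\cdot z = a\cdot(b\cdot z)$), the candidate satisfies
\[
h\cdot (g\cdot (x_r)) = \bigl(\rho_s((x_r))g^{-1}\bigr)\cdot\bigl(g\cdot(x_r)\bigr) = \rho_s((x_r))\cdot (x_r) \in \mathcal{K}_s,
\]
where the last membership is exactly (\ref{sectioneq}) for the original $N$-gon.

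Uniqueness of the element satisfying the cross-section condition then forces $\rho_s(g\cdot (x_r)) = \rho_s((x_r))\, g^{-1}$, which is precisely the equivariance required of a right discrete moving frame. Since the argument is independent of $s$, it holds for every component and hence for the full map $\rho = (\rho_s)$.

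The only point requiring care is the local uniqueness in (\ref{sectioneq}): this is where freeness and regularity of the diagonal action (assumed on the domain of interest, as in the Remark following Theorem \ref{Reprule}) enter. Once uniqueness is in place there is no real obstacle, as the calculation is algebraic. I would therefore briefly flag at the start of the proof that the domain is chosen so that each $\mathcal{K}_s$ meets each orbit in a single point, guaranteeing the $\rho_s$ are well defined, and then conclude with the displayed equivariance computation above.
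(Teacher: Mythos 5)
Your proposal is correct and follows essentially the same route as the paper's own proof: the paper likewise observes that $\rho_s((x_r))\,g^{-1}$ satisfies the cross-section condition for the transformed gon and concludes equivariance from uniqueness. You have simply spelled out the verification $\bigl(\rho_s((x_r))g^{-1}\bigr)\cdot\bigl(g\cdot(x_r)\bigr)=\rho_s((x_r))\cdot(x_r)\in\mathcal{K}_s$ and the role of freeness and regularity more explicitly than the paper does.
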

\begin{proof} The proof is entirely analogous to that of a single frame. We will denote by $\rho_s((x_r))$ the unique element of the group determined by equations (\ref{sectioneq}).
On the other hand $\rho_s((x_r)) g^{-1}$ also satisfies those equations, and so $\rho_s(g\cdot(x_r)) = \rho_s((x_r)) g^{-1}$. Therefore $\rho$ is a right discrete moving frame. 
\end{proof}

As in the continuous case, a practical way to obtain a right discrete moving frame is using  a coordinate cross-section, defined as follows. 
Denote by $x^i$ the coordinates of $x\in M$. Let $\rho = (\rho_s((x_r)))\in G$ be uniquely determined by a number of equations of the form
\begin{equation}\label{normeq}
\left(\rho_s((x_r))\cdot x_k\right)^i = (c^s_k)^i, 
\end{equation}
for any $s$,  $k$ in a set that depends on $s$, and some chosen values of $i$ that depend on $s$ and $k$, and where  $(c_r^s)^i\in \R$.  
{ From now on those $(x_r)$ for which the map $g \to (g\cdot x_r)^i$ has full rank for the choice of $i$'s defining the moving frame will be called {\it regular points}}. We usually refer to  the $(c_k^s)^i\in M$ as the {\it normalization constants}.

{ \begin{remark} In some of our examples, and indeed for all examples we envisage, the normalisation equations for the $\rho_r$ are shifts of each other. However, it is interesting that this is not necessary to produce the recursion relation (\ref{recursion}) we derive in Section \ref{disMCinvs}. When normalization equations are shifts of each other the construction of a discrete moving frame depends only on a single function of several variables and the different components of the moving frame are generated by evaluating this function on the corresponding vertices. \end{remark}}


 \subsection{Discrete Maurer--Cartan invariants}\label{disMCinvs}
A discrete moving frame {on $M^N$} provides $N$ different sets of generators, since each $\rho_s$ generates a complete set.
With this abundance of choices one is hard pressed to chose certain distinguished invariants, and a small group that will generate all others. Although a correct choice will depend on what you would like to use the invariants for, a good choice in our case are what we will denote the discrete Maurer--Cartan invariants. They are produced by the discrete equivalent of the Serret--Frenet equations. From now on we will assume that $G \subset \GL(m, \R)$ and so $K_r$ are represented by matrices; this is not necessary, but it is convenient.

The following definition appeared in \cite{OST}.

\begin{definition}[Twisted $N$-gon]
{\it A twisted $N$-gon} in a manifold $M$ is a map $\phi:\Z\to M$ such that for some fixed $g\in G$ we have $ \phi(k+N) = g\cdot \phi(k)$ for all $k\in \Z$. (Recall $\cdot$ represents the action of $G$ on $M$ { given by left multiplication on representatives of the class}.) The element $g\in G$ is called {\it the monodromy} of the gon. 
\end{definition}
\begin{remark} The choice of twisted $N$-gons is one among several ways to make the shift operator well-defined. Because of the need to apply the shift operator freely on the $N$-gons, one is forced to either consider infinite polygons, or to impose some kind of periodicity condition. Although one could merely choose closed gons, there is an important reason to work with twisted $N$-gons instead: twisted $N$-gons will give rise to $N$-periodic invariants; however, a general set of periodic invariants will not, in general, be associated to $N$-periodic gons, but to twisted ones. Nevertheless, for much of what follows we can also consider infinite gons and we will occasionally consider the infinite case in the applications.
\end{remark} 
The space of twisted $N$-gons in $M$ can be identified with the Cartesian product of $N$ copies of the manifold $M$, and hence our previous theory applies.

\begin{definition} 
Let ${ (\rho_s)}$ be a left (resp. right) discrete moving frame evaluated along a twisted $N$-gon, the element of the group
\[
K_s = \rho^{-1}_s\rho_{s+1} \hskip 2ex (\hbox{resp.}\hskip 1ex \rho_{s+1}\rho_s^{-1})
\]
is called the left (resp. right) $s$-Maurer--Cartan matrix for $\rho$. We will call the equation $\rho_{s+1} = \rho_s K_s$ the discrete left $s$-Serret--Frenet equation.
\end{definition}
One can directly check that if $K_s$ is a left Maurer--Cartan matrix for the left frame $(\rho_s)$, then $K_s^{-1}$ is a right one for the
right frame $\widehat{\rho}=(\rho^{-1}_s)$, and vice versa.
 
The equivariance of $\rho$ immediately yields that the $K_s$ are invariant under the action of $G$. We show next how the components of the Maurer--Cartan
matrices generate the algebra of invariants by exhibiting recursion relations between them and the normalized invariants, for our expository example.

\vskip 1ex
\noindent\textbf{Example \ref{mainexposex} cont.}\  Recall the group is $G=\mathbb{R}^+\ltimes \mathbb{R}$ acting  on $\mathbb{R}$ as
$\widetilde{z}=(\lambda, a)\cdot z = \lambda z +a$, which is represented so that the action is left (multiplication) as
$$ \left(\begin{array}{cc} \lambda & a\\ 0&1\end{array}\right)\left(\begin{array}{c} z \\ 1\end{array}\right) =
\left(\begin{array}{c} \lambda z +a \\ 1\end{array}\right).$$
If we take the normalization equations of $\rho_s$ to be $\widetilde{z}_s=1$ and $\widetilde{z}_{s+1}=0$, then the right discrete frame is
$$\rho_s = \left(\begin{array}{cc}-\displaystyle\frac{1}{z_{s+1}-z_s} &\displaystyle\frac{z_{s+1}}{z_{s+1}-z_s} \\ [12pt] 0&1 \end{array}\right).$$
By definition, $\rho_s\cdot z_r = I^s_r=-(z_{r}-z_{s+1})/(z_{s+1}-z_s)$ and then\footnote{We calculate the left Maurer--Cartan matrix for the left frame
$\rho^{-1}$ as the calculations are simpler.}
\begin{equation}\label{firstKeq}K_s = \rho_s\rho_{s+1}^{-1} = \left(\begin{array}{cc} -I^{s}_{s+2} & I^{s}_{s+2}\\0&1\end{array}\right)\end{equation}
as can be verified directly. We do not need, however, to have solved for the frame to obtain this result; it can be obtained directly by
noting that we have both
\begin{equation}\label{TwoRhoEqs} \rho_{s+1}\left(\begin{array}{cc} z_{s+1} & z_{s+2}\\ 1&1\end{array}\right) = \left(\begin{array}{cc} 1 &0\\ 1&1\end{array}\right),
\qquad 
\rho_{s}\left(\begin{array}{cc} z_{s+1} & z_{s+2}\\ 1&1\end{array}\right) = \left(\begin{array}{cc} 0 &I^{s}_{s+2}\\ 1&1\end{array}\right)
\end{equation}
or
$$\rho_s\rho_{s+1}^{-1}= \left(\begin{array}{cc} 0 &I^{s}_{s+2}\\ 1&1\end{array}\right)\left(\begin{array}{cc} 1 &0\\ 1&1\end{array}\right)^{-1},$$
verifying the result in (\ref{firstKeq}).
Note that for the second calculation, we are using the invariants $I^r_s$ symbolically; if we have not solved for the frame, then we may not know
what these are.

Calculations similar to those of Equation (\ref{TwoRhoEqs}) give recursion relations between the $I^s_r$. For example, we have both
$$\rho_s\left(\begin{array}{cc} z_{s+k} & z_{s+k+1}\\ 1&1\end{array}\right)=\left(\begin{array}{cc} I_{s+k}^s&I_{s+k+1}^s\\1&1\end{array}\right),
\quad \rho_{s+k}\left(\begin{array}{cc} z_{s+k} & z_{s+k+1}\\ 1&1\end{array}\right)=\left(\begin{array}{cc} 1&0\\1&1\end{array}\right)
$$
so that
$$K_s K_{s+1}\cdots K_{s+k-1}=\rho_s\rho_{s+k}^{-1}=\left(\begin{array}{cc} I_{s+k}^s- I_{s+k+1}^s&I_{s+k+1}^s\\ 0&1\end{array}\right).$$
However, we also have by direct calculation using our  $K_i$, for example,
$$K_s K_{s+1}=\left(\begin{array}{cc} I_{s+2}^sI_{s+3}^{s+1} & I_{s+2}^s(1-I_{s+3}^{s+1})\\ 0 &1\end{array}\right)$$
and thus for $k=2$ we have $I_{s+3}^{s}=I_{s+2}^s(1-I_{s+3}^{s+1})$.
  In this way, it is possible to obtain all the $I_r^s$ from the components
of the $K_k$'s, that is, the $I_{k+2}^k$. This smaller set of generating invariants we will denote as the discrete Maurer--Cartan invariants.
\smallskip

\begin{example}\label{ex3dot6}\end{example} Our next example is the centro-affine action of $\SL(2, \R)$ on  $\R^2$; that is $\SL(2,\R)$ acts linearly on $\R^2$. 
We may cast the centro-affine plane in the form $G/H$ by identifying $\R^2$ with $SL(2,\R)/H$, where $H$ is the isotropy subgroup of $e_2 = \begin{pmatrix} 0\\ 1\end{pmatrix}$. There is a reason why we are choosing $e_2$ rather than $e_1$; the reader will readily see the connection to Example 
\ref{ex3dot7}, the projective line. In order to find a moving frame we will use the following normalization equations
 \[
 \rho_s\cdot x_s= c_s^s = e_2, \hskip 3ex \rho_s\cdot x_{s+1} = c_{s+1}^s = -\deter{x_s, x_{s+1}} e_1
 \]
 where $\deter{x_s, x_{s+1}} $ is the determinant of the two $2$-vectors; specifically, the second component of $ \rho_s\cdot x_{s+1}$ being zero
 is the normalization equation, while the first is determined by the previous normalizations and the fact that  $\rho_s \in \SL(2, \R)$.  This results in the left moving frame
 \[
 \rho_s^{-1} = \begin{pmatrix}- \frac1{\deter{x_s,x_{s+1}}} x_{s+1} &x_s\end{pmatrix}.
 \]
From here, a complete set of generating invariants will be the components of
\[
\rho_s \cdot x_r = \rho_s x_r = \begin{pmatrix}\deter{x_r, x_s}\\ \frac 1{\deter{x_s, x_{s+1}}} \deter{x_r, x_{s+1}}  \end{pmatrix}
\]
with $s$ fixed and $r = 1, \dots N$; that is, 
$\deter{x_s, x_r}$, $\deter{x_{s+1}, x_r}$ with $s$ fixed and $r = 1, \dots N$.  To see how the $r$-basic invariants will generate the $k$-basic invariants it suffices to recognise the relation
\[
\deter{x_r, x_k} = \deter{\rho_s x_r, \rho_sx_k} = \deter{\begin{array}{cc}\displaystyle\deter{x_r, x_s} & \displaystyle\deter{x_k, x_s}\\ \displaystyle\frac{\deter{x_r, x_{s+1}}}{\deter{x_s, x_{s+1}}}&\displaystyle \frac{\deter{x_k, x_{s+1}}}{\deter{x_s, x_{s+1}}}\end{array}} 
\]
\[
= \frac1{\deter{x_s, x_{s+1}}}\left( \deter{x_r, x_s}\deter{x_k, x_{s+1}} -\deter{x_k, x_s}\deter{x_r, x_{s+1}}\right).
\]
obtained by the Replacement Rule, Proposition \ref{repruledis}.
The left Maurer--Cartan matrix is given by 
\begin{equation}\label{cainv}
\rho_s\rho_{s+1}^{-1} = \begin{pmatrix}k_s^2 & -k_s^1\\ \frac 1{k_s^1} &  0\end{pmatrix}
\end{equation}
where $k_s^2 = \frac{\deter{x_s, x_{s+2}}}{\deter{x_{s+1}, x_{s+2}}}$ and $k_s^1 = \deter{x_s, x_{s+1}}$. 
By calculations similar to those of the previous example, we have that
these invariants, with $s = 1, \dots, N$ generate all other invariants. 
This follows from observing
$$\rho_s\cdot x_r = \rho_s\rho_{s+1}^{-1}\rho_{s+1}\rho_{s+2}^{-1}\cdots \rho_{r-1}\rho_{r}^{-1}\left(\rho_r\cdot x_r\right)
=K_s K_{s+1}\cdots K_{r-1} e_2
$$
and similarly if $r<s$.
In fact, using $k_{s+1}^1$ we could simplify these generators to the simpler set $\deter{x_s, x_{s+1}}, \deter{x_s, x_{s+2}}$, $s= 1,\dots, N$, as expected.
\medskip


Our final expository example before proving our result concerning the discrete Maurer--Cartan invariants is the projective action of $\SL(2, \R)$ on $N$-gons in $\RP^1$.  

\begin{example}\label{ex3dot7}\end{example} Consider local coordinates in $\RP^1$ such that a lift from $\RP^1$ to $\R^2$ is given by $x \to \begin{pmatrix} x\\ 1\end{pmatrix}$. In that case $\RP^1$ can be identified with $\SL(2, \R)/H$, where $H$ is the isotropy subgroup of $x = 0$. The action is given by the fractional transformations
\[
\rho_s \cdot x_r = \begin{pmatrix} a_s&b_s\\ c_s&d_s\end{pmatrix}\cdot x_r = \frac{a_sx_r+b_s}{c_sx_r+d_s}, \qquad a_s d_s-b_s c_s=1.
\]
{ In this particular case one can use a geometric description to find a simple moving frame without resorting to normalization equations (most simple normalization equations would produce a moving frame with no clear geometric meaning)}. Indeed, we can lift $x_s$ to $V_s \in \R^2$ so 
that $\det (V_{s+1}, V_{s}) = 1$ for all $s$. It suffices to define $V_s= t_s\begin{pmatrix} x_s\\ 1\end{pmatrix}$ and solve
\[
t_st_{s+1}\det\begin{pmatrix} x_s& x_{s+1}\\ 1& 1\end{pmatrix} = 1.
\]
{\it If $N$ is not even}, this equation can be uniquely solved for $t_s$, $s=1,\dots,N$, using the twisted condition $t_{N+s} = t_s$ for all $s$. The element 
\[
\rho_s = (V_{s+1}, V_{s})
\]
is clearly a moving frame since the lift of the projective action to $\R^2$ is the linear action. It also satisfies $\rho_s\cdot o=x_s$,
where $o=(0,1)^T$ is the equivalence class of $H$ in $\SL(2, \R)/H$,  where recall $H$ is the isotropy subgroup of $x = 0$. 
Given that $\R^2$ is generated by $V_s, V_{s+1}$ for any choice of $s$, we have that
\[
V_{s+2} = k_s V_{s+1}-V_s
\]
for all $s$ (the coefficient of $V_s$ reflects the fact that $\det(V_{s+1}, V_s) = 1$ for all $s$). From here
\[
\rho_{s+1} = \rho_s \begin{pmatrix} k_s&1\\ -1&0\end{pmatrix}
\]
and so the Maurer--Cartan matrix is given by
\begin{equation}\label{projinv}
K_s = \begin{pmatrix} k_s&1\\ -1&0\end{pmatrix}.
\end{equation}
This invariant appeared in \cite{OST}.
\vskip 1ex

Directly from the definition of Maurer--Cartan matrix one can see that their entries, together with $I_s^s$, are generators for all discrete invariants of $N$-gons.

\begin{proposition}
Let $K = (K_s)$ be the left (resp.\ right) Maurer--Cartan matrix associated to a moving frame $(\rho_s)$.  Then $K$ satisfies the recursion relations
\begin{equation}\label{recursion}
K_s \cdot I_r^{s+1} = I_r^s  \hskip 2ex(\hbox{resp.\ }
K_s\cdot I_r^s = I_r^{s+1})
\end{equation}
for all $s,r$. Furthermore, all discrete invariants are generated by $K$ and $I_s^s$, $s=1, \dots N$.
\end{proposition}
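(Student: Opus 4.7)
The plan is to reduce both claims to direct algebraic substitutions using the definitions, combined with the $s$-Replacement Rule (Proposition~\ref{repruledis}).

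For the recursion, the right case is a one-line verification: with $K_s = \rho_{s+1}\rho_s^{-1}$ and $I_r^s = \rho_s\cdot x_r$,
\[
K_s\cdot I_r^s = \rho_{s+1}\rho_s^{-1}\cdot(\rho_s\cdot x_r) = \rho_{s+1}\cdot x_r = I_r^{s+1}.
\]
The left case is analogous. For a left moving frame $(\rho_s)$ the associated right frame is $(\rho_s^{-1})$, so the natural invariants are $I_r^s = \rho_s^{-1}\cdot x_r$; then with $K_s = \rho_s^{-1}\rho_{s+1}$ we get $K_s\cdot I_r^{s+1} = \rho_s^{-1}\rho_{s+1}\cdot(\rho_{s+1}^{-1}\cdot x_r) = I_r^s$.

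For the generating property, invoke the Replacement Rule: every discrete invariant is a function of the $s$-basic invariants $\{I_r^s\}_{r=1}^{N}$ for any single fixed $s$. Hence it suffices to express each $I_r^s$ in terms of the entries of $K$ and the diagonal invariants $\{I_t^t\}$. Iterating the recursion telescopes: in the right case, for $s>r$ one gets $I_r^s = K_{s-1}K_{s-2}\cdots K_r\cdot I_r^r$, while for $s<r$ the chain is inverted, giving $I_r^s = K_s^{-1}\cdots K_{r-1}^{-1}\cdot I_r^r$. This is precisely the pattern already exhibited in the centro-affine computation $\rho_s\cdot x_r = K_sK_{s+1}\cdots K_{r-1}\cdot e_2$ preceding the statement. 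In either case $I_r^s$ lies in the subalgebra generated by $K$ together with the diagonal invariants, so every invariant does as well by the Replacement Rule. The left case is identical up to reversing the order of multiplication.

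There is no substantive obstacle beyond bookkeeping: one must keep straight the handedness conventions (left frame versus right frame, and the matching pairing of $K_s$ with the corresponding $I_r^s$), and in the twisted $N$-gon setting note that closure of the telescoping across the period follows automatically from the equivariance of $(\rho_s)$ and the defining relation for the monodromy.
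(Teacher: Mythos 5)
Your proof is correct and follows essentially the same route as the paper's: the recursion is verified in one line directly from the definitions of $K_s$ and $I_r^s$, and the generating property follows from the Replacement Rule combined with telescoping the recursion in both directions from the diagonal invariants $I_r^r$. If anything, your explicit bookkeeping of the handedness conventions (pairing the left Maurer--Cartan matrix with $I_r^s=\rho_s^{-1}\cdot x_r$) makes the recursion step cleaner than the paper's own displayed computation.
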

\begin{proof} The proof is a direct straightforward calculation. Since $K_s = \rho_s^{-1}\rho_{s+1}$, $K_s$ will satisfy
\[
K_s\cdot I_r^{s+1} = \rho^{-1}_s\cdot \rho_s\cdot I_r^s = I_r^s
\]
as stated. Since the right Maurer--Cartan matrix for the right frame $\widehat\rho_s$ is the inverse of the left one for
$\widehat\rho_s^{-1}$, it will satisfy $K_s\cdot I_r^s = I_r^{s+1}$.
 
To prove the generating property, let us denote by $I$ any discrete invariant. We know that $I$ will be a function of the basic invariants $I_r^s$ by the Replacement Rule, so it suffices to show that $I_r^s$, $r\ne s$, are all generated by the entries of $K_s$ and $I_s^s$. And this directly follows from the recursion formula. Indeed, we can use the recursion formulas for $K_{s-1}, \dots, K_0$ to find $I_s^{s-1},\dots,I_s^1, I_s^0$ from the values $I_s^s$ recursively. If we invert the recursion equation to read
\[
I_r^{s+1} = K_s^{-1}\cdot I_r^s
\]
then we can use the equations for $K_s, K_{s+1}, \dots K_{N-1}$ to equally generate $I_s^{s+1}$, $I_s^{s+2}, \dots, I_s^N$ from the value $I_s^s$. Therefore, $K_s$ and $I_r^r$ will generate $I_r^s$ for any $s, r=1, \dots, N$.  
\end{proof}

\begin{remark} If the normalization equations guarantee that $\rho$ is uniquely determined, the rank of the map $g \to g\cdot c_r^s$ will have maximum rank for each fixed $s$. Thus, it will equally guarantee that $K = (K_s)$ is uniquely determined by the recursion relations (\ref{recursion}) directly from the normalization constants. Therefore, one is able to find the Maurer--Cartan matrix {\it without knowing explicitly the moving frame}, but merely from the knowledge of the transverse section used to determine it. As we will see in our examples, this simplifies calculations considerably. \end{remark}

In what follows we will assume that our manifold is homogeneous $M = G/H$, with $H$ a closed subgroup. We denote by $o\in G/H$ the distinguished class of $H$.
We will finally show that given $K_s$, and assuming $\rho_s\cdot o = x_s$, for the moving frames generating $K_s$, the $N$-gon is completely determined up to the action of the group.

\begin{proposition} Let $(x_r)$ and $(\hat x_r)$ be two twisted $N$-gons with left moving frames $\rho$ and $\hat \rho$ such that $\rho_s\cdot o = x_s$, $\hat \rho_s \cdot o = \hat x_s$ and $\rho^{-1}_s\rho_{s+1} = \hat\rho_s^{-1}\hat\rho_{s+1} = K_s$. Then, there exists $g\in H$ such that $x_r = g\hat x_r$ for all $r$.
\end{proposition}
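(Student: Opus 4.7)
The strategy is to extract the connecting group element directly from the two sequences of frames, with no further work needed on the level of points. For each index $s$, define
\[
g_s := \hat\rho_s\,\rho_s^{-1} \;\in\; G.
\]
Rearranging the hypothesis $\rho_s^{-1}\rho_{s+1}=\hat\rho_s^{-1}\hat\rho_{s+1}=K_s$ into the equivalent form $\hat\rho_{s+1}\rho_{s+1}^{-1}=\hat\rho_s\rho_s^{-1}$ gives $g_{s+1}=g_s$ for every $s$. Hence all the $g_s$ coincide in a single element $g\in G$, and one has the frame-level identity $\hat\rho_s = g\,\rho_s$ for each $s$.

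The second step is immediate from the normalization hypothesis. Applying both sides of $\hat\rho_s = g\rho_s$ to the distinguished class $o\in G/H$ and using $\rho_s\cdot o = x_s$, $\hat\rho_s\cdot o = \hat x_s$ yields
\[
\hat x_s \;=\; \hat\rho_s\cdot o \;=\; g\cdot(\rho_s\cdot o) \;=\; g\cdot x_s
\]
for every $s$, which, after inverting $g$ (or relabeling the two gons), gives exactly the required relation $x_r = g\cdot \hat x_r$. Twisted periodicity is automatically consistent: if $\mu$ and $\hat\mu$ are the monodromies of $(x_r)$ and $(\hat x_r)$, the relation above applied at $s$ and at $s+N$ forces $\hat\mu = g\mu g^{-1}$, and no further constraint on $g$ is generated.

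The only mildly subtle point, and the one I would flag as the real content of the argument, is the role of the normalization $\rho_s\cdot o = x_s$ as opposed to merely having a moving frame. Without this, the right $H$-ambiguity in the choice of lift of each $\rho_s$ could prevent us from concluding that a single global $g$ produces \emph{all} the equalities $\hat x_s = g\cdot x_s$ simultaneously; it is precisely the matching normalizations on both gons that align this ambiguity and turn the identity $\hat\rho_s = g\rho_s$ (valid in $G$) into an identity on points in $G/H$. With this in mind I expect no genuine technical obstacle: the proof is essentially the single rearrangement producing $g_{s+1}=g_s$, followed by evaluation at $o$.
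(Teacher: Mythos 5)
Your argument is, in substance, the same as the paper's. The paper fixes $h=\rho_0\hat\rho_0^{-1}$ and propagates $\rho_s=h\hat\rho_s$ to all $s$ using the common Serret--Frenet recursions $\rho_{s+1}=\rho_sK_s$ and $\hat\rho_{s+1}=\hat\rho_sK_s$, then evaluates at $o$; your observation that $g_s=\hat\rho_s\rho_s^{-1}$ is independent of $s$ is exactly that computation packaged the other way round, and the evaluation step $\hat x_s=\hat\rho_s\cdot o=g\cdot(\rho_s\cdot o)=g\cdot x_s$ is identical. The remark about the monodromies is correct but not needed.

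The one place where you fall short of the statement as written is the membership $g\in H$: you produce only $g\in G$ and never return to the isotropy condition, so that clause of the proposition is left unproved. The paper does address it, asserting that $g=\rho_0\hat\rho_0^{-1}\in H$ ``since both $\rho_0$ and $\hat\rho_0$ leave $o$ invariant.'' Be aware, however, that under the stated normalization the frames satisfy $\rho_0\cdot o=x_0$ and $\hat\rho_0\cdot o=\hat x_0$ rather than fixing $o$, so what the construction actually yields is $g\cdot\hat x_0=x_0$, not $g\cdot o=o$; indeed, if one takes $\hat x_r=g_0\cdot x_r$ for an arbitrary $g_0\in G$, then by equivariance $\hat\rho_s=g_0\rho_s$, all hypotheses of the proposition hold, and the element produced by either your argument or the paper's is precisely $g_0$, which need not lie in $H$. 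So your instinct not to claim $g\in H$ is defensible, but as a proof of the proposition as literally stated your write-up has a gap there: you should either add an argument for the isotropy claim (which would require an extra hypothesis such as $x_0=\hat x_0$ together with freeness considerations) or state explicitly that the hypotheses only deliver $g\in G$.
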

 \begin{proof} Notice that from $\rho_{s+1} = \rho_sK_s$ and $\hat\rho_{s+1} = \hat \rho_s K_s$ it suffices to show that there exists $h\in H$ such that $\rho_0 = h\hat\rho_0$. Indeed, if this is true we will have $\rho_s = h\hat\rho_s$ for all $s$, and from here $\rho_s\cdot o = x_s = h\hat\rho_s\cdot o = h\cdot \hat x_s$.
 
 On the other hand, there clearly exists $g\in H$ such that $\rho_0 = g \hat\rho_0$, as it suffices to choose $g = \rho_0\hat\rho_0^{-1}$, and $g\in H$ since both $\rho_0$ and $\hat\rho_0$ leave $o$ invariant. The proposition follows.
 \end{proof}
 
{  Notice that the choice of $o$ in this proposition  is to some extent arbitrary (although it is geometrically the most convenient base point), and the assumption $M = G/H$ is indeed not needed. One could have chosen any fixed point $p \in M$, define $\rho_s\cdot p = x_s$, $\hat\rho_s\cdot p = \hat x_s$, and obtain $g$ in the isotropy subgroup of $p$. One can even choose different base points for $\rho$ and $\hat\rho$ in the same orbit and obtain a general element $g\in G$.}

\section{Lie symmetric evolutions,  maps and their invariantizations}\label{sec4}

In this section we will show how to write {\it any} invariant time evolution of twisted $N$-gons, as well as any invariant map from the space of twisted $N$-gons to itself, in terms of the invariants and the moving frame, in a straight forward and explicit fashion. For simplicity we continue to assume that  our manifold $M$ is
$G/H$ and the normalisation equations include $\rho_s\cdot o=x_s$ (for a left frame) for all $s$. The analogous theorems in the continuous case were published in \cite{M3}.

\begin{definition}

{ An evolution equation is said to have a \textit{Lie group symmetry} if the Lie group action takes solutions to solutions. A recurrence map is said to have a Lie group symmetry if it is equivariant with respect to the action of the group.}
\end{definition}
Such equations and maps
are usually called \textit{invariant}, and we will do so here. However, it is important to note that this does not mean that the equations are comprised of invariants of the action. 
\begin{definition}
We say the evolution 
\begin{equation}\label{invev0}
(x_s)_t = f_s((x_r))
\end{equation}
is an \textit{invariant time evolution} of the twisted $N$-gons  under the action of the group $G$ if its Lie symmetry group is $G$, that is, if $(x_r)$ is a solution, so is $(g\cdot x_r)$ for any $g\in G$. 
\end{definition}

Denote by $\Phi_g:G/H\to G/H$ the map defined by the action of $g\in G$ on $G/H$, that is $\Phi_g(x) = g\cdot x$. Further, denote by
$\mathrm{T}\Phi_g(z)$ the tangent map of $\Phi_g$ at $z\in G/H$. The map $\mathrm{T}\Phi_g(z)$ is written as $\mathrm{T}\Phi_g$ if the base point $z$ is  clear from the context\footnote{Given a manifold $M$ and a map $F:M\rightarrow M$, one standard coordinate free definition of the tangent map  $\mathrm{T}F(z):T_zM \rightarrow T_{F(z)}M$ is as follows. If a path $\gamma(t)\subset M$ satisfies
$\gamma(0)=z$  and the vector $\v\in T_zM={\rm d}/{\rm d}t\big\vert_{t=0} \gamma(t)$, then $\mathrm{T}F(z)(\v)={\rm d}/{\rm d}t\big\vert_{t=0} F\left(\gamma(t)\right)$.}.  In local coordinates $\mathrm{T}\Phi_g$ is given by the Jacobian of $\Phi_g$.  We have by the chain rule both that $(g\cdot x_s)_t = \mathrm{T}\Phi_{g}(x_s)\left((x_s)_t\right)$ and since 
$\Phi_{gh}=\Phi_g\circ\Phi_h$ we have that $\mathrm{T}\Phi_{gh}=\mathrm{T}\Phi_{g}\circ \mathrm{T}\Phi_h$. 
A recent textbook reference on group actions and the associated multi-variable calculus, with and without co-ordinates, is \cite{mansfield}.

Assume we have a discrete \textit{left} moving frame $\rho$ in the neighbourhood of a generic twisted $N$-gon, with a left action, so that
$\rho_s((g\cdot x_r)) = g \rho_s((x_r))$.
We can now state the following Theorem.
\begin{theorem}
Any invariant evolution of the form (\ref{invev0}) can be written as
\begin{equation}\label{invev}
(x_s)_t = \mathrm{T}\Phi_{\rho_s}(o)(\v_s)
\end{equation}
where $o = [H]$, and $\v_s((x_r))\in T_{o} M$ is a vector with invariant components, that is, $\v_s((g\cdot x_r)) = \v_s((x_r))$ for any $g\in G$ and for all $s$. 
\end{theorem}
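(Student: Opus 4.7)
The key observation is that because $\rho_s\cdot o = x_s$, the map $\Phi_{\rho_s}:G/H\to G/H$ is a diffeomorphism carrying $o$ to $x_s$, and so its tangent map $\mathrm{T}\Phi_{\rho_s}(o):T_oM\to T_{x_s}M$ is a linear isomorphism. This immediately suggests that we \emph{define}
\[
\v_s((x_r)) = \mathrm{T}\Phi_{\rho_s^{-1}}(x_s)\bigl(f_s((x_r))\bigr),
\]
which lies in $T_{\rho_s^{-1}\cdot x_s}M = T_oM$. Everything in the theorem then reduces to two checks: that this $\v_s$ satisfies the formula (\ref{invev}), and that it is invariant under the action.

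For the first check, I would apply $\mathrm{T}\Phi_{\rho_s}(o)$ to $\v_s$ and use the functoriality of the tangent map together with $\Phi_{gh}=\Phi_g\circ\Phi_h$. Since $\Phi_{\rho_s}\circ\Phi_{\rho_s^{-1}} = \Phi_e = \mathrm{id}$, the chain rule gives
\[
\mathrm{T}\Phi_{\rho_s}(o)\circ \mathrm{T}\Phi_{\rho_s^{-1}}(x_s) = \mathrm{T}\bigl(\Phi_{\rho_s}\circ\Phi_{\rho_s^{-1}}\bigr)(x_s) = \mathrm{id}_{T_{x_s}M},
\]
so $\mathrm{T}\Phi_{\rho_s}(o)(\v_s) = f_s((x_r)) = (x_s)_t$, which is (\ref{invev}).

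For the second check, I first need the natural consequence of $G$-invariance of the evolution: differentiating $(g\cdot x_s)_t = f_s((g\cdot x_r))$ using the chain rule on the left-hand side yields
\[
f_s((g\cdot x_r)) = \mathrm{T}\Phi_g(x_s)\bigl(f_s((x_r))\bigr).
\]
Combined with the left-equivariance $\rho_s((g\cdot x_r)) = g\rho_s((x_r))$ of the discrete moving frame, I substitute into the definition of $\v_s$:
\[
\v_s((g\cdot x_r)) = \mathrm{T}\Phi_{(g\rho_s)^{-1}}(g\cdot x_s)\circ \mathrm{T}\Phi_g(x_s)\bigl(f_s((x_r))\bigr).
\]
Expanding $(g\rho_s)^{-1}=\rho_s^{-1}g^{-1}$ via functoriality and noting that $\mathrm{T}\Phi_{g^{-1}}(g\cdot x_s)\circ \mathrm{T}\Phi_g(x_s) = \mathrm{id}_{T_{x_s}M}$, the $g$'s collapse and we recover $\v_s((x_r))$.

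The routine but bookkeeping-heavy part — and what I expect to be the main technical nuisance rather than a conceptual obstacle — is keeping careful track of the base points at which each tangent map is being evaluated, in particular distinguishing $\mathrm{T}\Phi_{\rho_s^{-1}}(x_s)$ (acting on $T_{x_s}M$) from $\mathrm{T}\Phi_{\rho_s}(o)$ (acting on $T_oM$), and using the composition identity $\mathrm{T}\Phi_{gh}(z) = \mathrm{T}\Phi_g(\Phi_h(z))\circ \mathrm{T}\Phi_h(z)$ correctly. Once this is done cleanly, both directions follow mechanically, and the converse — that every such formula defines an invariant evolution — is immediate by reversing the calculation.
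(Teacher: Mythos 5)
Your proposal is correct and takes essentially the same route as the paper's proof: both define $\v_s = \mathrm{T}\Phi_{\rho_s}(o)^{-1}\,f_s((x_r))$ (your $\mathrm{T}\Phi_{\rho_s^{-1}}(x_s)$ is exactly this inverse) and verify its invariance by combining the differentiated symmetry condition $f_s((g\cdot x_r)) = \mathrm{T}\Phi_g(x_s)\,f_s((x_r))$ with the left-equivariance $\rho_s((g\cdot x_r)) = g\,\rho_s((x_r))$ and the chain rule applied to $\Phi_{g\rho_s}=\Phi_g\circ\Phi_{\rho_s}$ at $o$.
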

\begin{proof} 
We  need to prove that
\[
\mathrm{T}\Phi_{\rho_s((x_r))}(o)^{-1} f_s((x_r))
\]
has invariant components under the action of $G$, and call it $\v_s$; recall that the normalisation equations include $\rho_s\cdot o=x_s$ for all $s$, so that $\v_s\in T_{o}M$.
 Indeed, since (\ref{invev0}) is an invariant evolution,
\[
(g\cdot x_s)_t = \mathrm{T}\Phi_{g}(x_s)\cdot(x_s)_t = \mathrm{T}\Phi_{g}(x_s)\cdot f_s((x_r)) = f_s((g\cdot x_r)).
\]
Also, differentiating  $\Phi_{g\rho_s}=\Phi_g\Phi_{\rho_s}$ at $o$ and applying the chain rule yields 
\[
\mathrm{T}\Phi_{\rho_s((g\cdot x_r))}(o) = \mathrm{T}\Phi_{g\rho_s((x_r))}(o) = \mathrm{T}\Phi_{g}(x_s)\circ \mathrm{T}\Phi_{\rho_s((x_r))}(o).
\]
From here
\[
\mathrm{T}\Phi_{\rho_s((g\cdot x_r))}(o)\mathrm{T}\Phi_{\rho_s((x_r))}^{-1}(o) f_s((x_r)) = f_s((g\cdot x_r)), 
\]
which implies that $\mathrm{T}\Phi_{\rho_s((x_r))}(o)^{-1} f_s((x_r))$ is invariant.
\end{proof}
\smallskip

We now consider invariant maps.

\begin{definition}
We say the map 
\begin{equation}\label{invmap0}
F(x_s) = \left(F_s((x_r))\right),
\end{equation}
is an \textit{invariant map} of the twisted $N$-gons  under the action of the group $G$ if its Lie symmetry group is $G$, that is, 
$F((g\cdot x_s)) = g F((x_s))$ for any $g\in G$.
\end{definition}

We have the following theorem. We continue to assume we have a left moving frame and a left action.
\begin{theorem}
If $F$ is an invariant map of the form (\ref{invmap0}), then
\begin{equation}\label{invmap}
F_s((x_r)) =\rho_s((x_r))\cdot z_s((x_r))
\end{equation}
where $z_s ((x_r))\in G/H$ is an invariant  element, that is, $z_s((g\cdot x_r)) = z_s((x_r))$ for any $g\in G$.
\end{theorem}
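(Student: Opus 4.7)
The plan is to define $z_s((x_r)) := \rho_s((x_r))^{-1} \cdot F_s((x_r))$ and then verify two things: (i) $z_s$ is an invariant element of $G/H$, and (ii) the resulting formula reconstructs $F_s$. The second point is immediate from the definition, since acting on both sides by $\rho_s((x_r))$ gives $\rho_s((x_r)) \cdot z_s((x_r)) = F_s((x_r))$, which is the desired expression \eqref{invmap}. So the content of the proof lies entirely in checking invariance of $z_s$.

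First I would record the two equivariance/invariance hypotheses in exactly the form they will be used. From the fact that $\rho$ is a left discrete moving frame for a left action, we have $\rho_s((g\cdot x_r)) = g \, \rho_s((x_r))$, and hence $\rho_s((g\cdot x_r))^{-1} = \rho_s((x_r))^{-1} g^{-1}$. From the definition of an invariant map we have $F_s((g\cdot x_r)) = g \cdot F_s((x_r))$, where the dot denotes the action of $G$ on $G/H$.

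Next I would combine these two identities to compute
\[
z_s((g\cdot x_r)) = \rho_s((g\cdot x_r))^{-1} \cdot F_s((g\cdot x_r)) = \bigl(\rho_s((x_r))^{-1} g^{-1}\bigr) \cdot \bigl(g \cdot F_s((x_r))\bigr).
\]
Using that $G$ acts on $G/H$ (so $(hk) \cdot y = h \cdot (k\cdot y)$), the right-hand side collapses to $\rho_s((x_r))^{-1} \cdot F_s((x_r)) = z_s((x_r))$, which is precisely the invariance statement. This is the same cancellation pattern used to show that the normalized invariants $I_r^s = \rho_s \cdot x_r$ are invariant in Proposition~\ref{repruledis}, and to show invariance of $\v_s$ in the previous theorem.

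There is no real obstacle here; the only subtlety is bookkeeping about the handedness of the frame and the action. One should note that the construction does not require $F_s$ to satisfy any analogue of the condition $\rho_s\cdot o = x_s$, because unlike the tangent-vector case (where one must anchor $\v_s$ at $o$), the element $z_s$ is simply a point of $G/H$ with no distinguished basepoint. Finally, the formula \eqref{invmap} gives a bijection between invariant maps $F$ and arbitrary $G/H$-valued invariant assignments $(x_r) \mapsto (z_s((x_r)))$, exactly parallel to the role played by $\v_s$ in parametrizing invariant evolutions.
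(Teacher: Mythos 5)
Your proposal is correct and follows essentially the same route as the paper: define $z_s = \rho_s^{-1}\cdot F_s((x_r))$, then use the left equivariance $\rho_s((g\cdot x_r)) = g\rho_s((x_r))$ together with $F_s((g\cdot x_r)) = g\cdot F_s((x_r))$ to cancel $g$ and conclude invariance. The additional remarks (that no basepoint condition is needed and that the formula parametrizes invariant maps) are accurate but beyond what the paper's proof records.
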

\begin{proof} It suffices to show that $\rho_s^{-1}\cdot F_s((x_r))$ is invariant and to call it $z_s$. For a left moving frame and a left action we have
\[
\rho_s((g\cdot x_r)) = g \rho_s((x_r))
\]
and by invariance of the map we have
\[
F((g\cdot x_s)) = \left(F_s((g\cdot x_r))\right) = g F((x_s)) = g \left(F_s((x_r))\right),
\]
so that
\[
\rho_s^{-1}((g\cdot x_r)) F_s((g\cdot x_r)) = \rho_s^{-1}((x_r))F_s((x_r)).
\]
\end{proof}

If either a map or a time evolution is invariant, then there is a corresponding induced map or evolution on the invariants themselves. 
The reduction process is at times very involved and time consuming. Here we will describe a simple and straightforward way to find explicitly this so-called {\it invariantization} of the evolution.
{Further, in { Section \ref{sec5}} we detail how, for a proper choice of evolutions, the resulting invariantizations are integrable, in the sense that they can be written in two different ways as a Hamiltonian system, using a Hamiltonian pair. We also illustrate, in Section \ref{invmapaswell}, that some invariantized map also results in a discrete integrable (biPoisson) mapping.
}

\noindent\textit{Assumption.}\/ From now on we will assume to have chosen $\s : M= H \to G$, a section of the quotient $G/H$ such that $\s(o) = e \in G$, where $e$ is the identity. 

\begin{theorem}\label{structeq} Assume we have an invariant evolution of the form (\ref{invev}) and let $\s$ be a section such that $\s(o) = e \in G$. Assume $\rho_r\cdot o = x_r$ and $\rho_r = \s(x_r)\rho_r^H$, where $\rho_r^H \in H$. Then
\begin{equation}\label{invarev}
(K_s)_t = K_sN_{s+1} - N_s K_s
\end{equation}
where $K_s$ is the left Maurer--Cartan matrix and $N_s = \rho_s^{-1}(\rho_s)_t \in \g$. Furthermore, if we split $\g = \m\oplus \h$, where $\g$ is the algebra of $G$, $\h$ is the algebra of $H$ and $\m$ is a linear complement that can be identified with the tangent to the image of the section $\s$, and if $N_s = N_s^\h + N_s^\m$ splits accordingly, then
\begin{equation}\label{Ncond}
N_s^\m = \mathrm{T}\s(o) \v_s.
\end{equation}
\end{theorem}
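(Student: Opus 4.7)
The plan is to establish the two claims separately: equation (\ref{invarev}) is a routine compatibility identity, while (\ref{Ncond}) requires unpacking the section $\s$ at the infinitesimal level.

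For (\ref{invarev}), I would directly differentiate $K_s=\rho_s^{-1}\rho_{s+1}$ with respect to $t$, using $(\rho_s^{-1})_t=-\rho_s^{-1}(\rho_s)_t\rho_s^{-1}$ together with the definition $N_s=\rho_s^{-1}(\rho_s)_t$. Grouping the two resulting terms gives
\[
(K_s)_t = -N_s K_s + \rho_s^{-1}\rho_{s+1}\cdot\rho_{s+1}^{-1}(\rho_{s+1})_t = K_s N_{s+1}-N_s K_s.
\]
This step is purely formal; it does not use the invariant form of the evolution, only the definition of $K_s$.

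For (\ref{Ncond}), I would compute $(x_s)_t$ in two different ways and compare. Writing the one-parameter perturbation $\rho_s(t+\tau)=\rho_s(t)\gamma(\tau)$ with $\gamma(0)=e$ and $\dot\gamma(0)=N_s$, the chain rule applied to $x_s(t+\tau)=\rho_s(t)\cdot(\gamma(\tau)\cdot o)$ gives $(x_s)_t=\mathrm{T}\Phi_{\rho_s}(o)(w_s)$, where $w_s=\frac{d}{d\tau}\big|_{\tau=0}(\gamma(\tau)\cdot o)\in T_oM$. Comparing with the given form $(x_s)_t=\mathrm{T}\Phi_{\rho_s}(o)(\v_s)$ and using that $\mathrm{T}\Phi_{\rho_s}(o)$ is a linear isomorphism (as $\Phi_{\rho_s}$ is a diffeomorphism), I obtain $w_s=\v_s$. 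To identify $w_s$ with an $\m$-component of $N_s$, I would decompose $\gamma(\tau)=\s(q(\tau))h(\tau)$ with $q(\tau)=\gamma(\tau)\cdot o$ and $h(\tau)\in H$; the initial conditions are $q(0)=o$ and, using $\s(o)=e$, also $h(0)=e$. The product rule at $\tau=0$ then gives
\[
N_s=\dot\gamma(0)=\mathrm{T}\s(o)\,\dot q(0)+\dot h(0).
\]
Because $\mathrm{T}\s(o)$ maps $T_oM$ isomorphically onto $\m$ (by transversality of the section to $H$) and $\dot h(0)\in\h$, uniqueness of the splitting $\g=\m\oplus\h$ yields $N_s^\m=\mathrm{T}\s(o)\,\dot q(0)=\mathrm{T}\s(o)\,w_s=\mathrm{T}\s(o)\,\v_s$.

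The main subtlety is ensuring the splitting of $N_s$ along $\g=\m\oplus\h$ separates cleanly into a tangent-to-$\s$ part and a pure-$\h$ part: this is exactly what the hypothesis $\s(o)=e$ delivers, since without it the product rule applied to $\s(q(\tau))h(\tau)$ would produce extra conjugation terms. The factor $\rho_s^H$ from $\rho_s=\s(x_s)\rho_s^H$ never explicitly appears in the final formula because all the comparison is localized at the origin $o$, which absorbs any $H$-valued contribution into the $\h$-component of $N_s$.
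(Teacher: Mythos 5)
Your proof of (\ref{invarev}) is the same direct differentiation of $K_s=\rho_s^{-1}\rho_{s+1}$ that the paper uses, so there is nothing to add there. For (\ref{Ncond}) you take a genuinely different, and in fact cleaner, route. The paper differentiates the global relation $g\,\s(x)=\s(g\cdot x)h(x,g)$, substitutes $g=\rho_r^{-1}$, and must then track several left and right translation maps and verify the identity $h(x_r,\rho_r^{-1})=(\rho_r^H)^{-1}$ so that the factors $\mathrm{T}R_{\rho_r^H}$ and $\mathrm{T}R_{(\rho_r^H)^{-1}}$ cancel at the end. You instead pull everything back to the identity: the curve $\gamma(\tau)=\rho_s(t)^{-1}\rho_s(t+\tau)$ passes through $e$ with velocity $N_s$, and its local factorization $\s(q(\tau))h(\tau)$ has both factors equal to $e$ at $\tau=0$ (precisely because $\s(o)=e$), so the product rule has no cross terms and the splitting $N_s=\mathrm{T}\s(o)\,\dot q(0)+\dot h(0)$ along $\g=\m\oplus\h$ is immediate; the identification $\dot q(0)=w_s=\v_s$ then follows from the injectivity of $\mathrm{T}\Phi_{\rho_s}(o)$. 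What your argument buys is the elimination of all the translation bookkeeping, and it also makes transparent that the hypothesis $\rho_s=\s(x_s)\rho_s^H$ is automatic once $\rho_s\cdot o=x_s$ and $\s$ is a section (since $\s(x_s)^{-1}\rho_s$ fixes $o$), which is why that factor never surfaces in your computation. The paper's version, by contrast, makes explicit where the $H$-part of the frame enters and why it ultimately drops out. Your closing remark on the role of the normalization $\s(o)=e$ is exactly the right diagnosis of the one place where the argument could otherwise acquire conjugation terms.
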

\begin{proof} The first part of the proof is a straightforward computation
\[
(K_s)_t = \left(\rho_s^{-1}\rho_{s+1}\right)_t = \rho_s^{-1}(\rho_{s+1})_t - \rho_s^{-1}(\rho_s)_t \rho_s^{-1}\rho_{s+1} = K_sN_{s+1}-N_sK_s.
\]
Before proving the second part we notice that if $\s$ is a section, then
\begin{equation}\label{action}
g \s(x) = \s(g\cdot x) h(x,g)
\end{equation}
for some unique $h\in H$. In fact, one can take this relation as defining uniquely the action of the group $G$ on a homogeneous space $G/H$ in terms of the section. We will use this relation shortly.

 Since $\rho_s = \s(x_s)h((x_r))$, by the product and chain rules of differentiation, we have 
\begin{equation}\label{Nsplit}
N_s = \rho_s^{-1}(\rho_s)_t = \left(\mathrm{T}L_{\rho_s^{-1}}\right)\left(\mathrm{T}R_{\rho_s^H}\right) \mathrm{T}\s(x_s)(x_s)_t + \left(\mathrm{T}L^{-1}_{\rho_s^H}\right) T\rho_s^H((x_r))((x_r)_t)
\end{equation}
where $L$ and $R$ signify left and right multiplication and where $\mathrm{T}L$ and $\mathrm{T}R$ are their tangent maps.
Very clearly the second term belongs to $\h$ and so we will focus on the first term.
If we differentiate (\ref{action}) evaluated on $x_r$ we have
\[
\left(\mathrm{T}L_g\right)\mathrm{T}\s(x_r)(x_r)_t = \left(\mathrm{T}R_{h(x_r,g)}\right)\mathrm{T}\s(\Phi_g(x_r))\mathrm{T}\Phi_g(x_r)(x_r)_t + \left(\mathrm{T}L_{\s(g\cdot x)}\right)\mathrm{T}h(x_r, g)(x_r)_t
\]
for any $g$, where $\mathrm{T}h(x,g)$ is the derivative of $h(x,g)$ as a function of $x$. If we now substitute $g = \rho_r^{-1}$ and we use (\ref{invev}), $\s(o) = e$ and $\mathrm{T}\Phi_{\rho_r^{-1}}(x_r) =  \left(\mathrm{T}\Phi_{\rho_r}(o)\right)^{-1}$ we have
\[
\left(\mathrm{T}L_{\rho_r^{-1}}\right) \mathrm{T}\s(x_r)(x_r)_t = \left(\mathrm{T}R_{h(x_r,\rho_r^{-1})}\right) \mathrm{T}\s(o)\v_r+\mathrm{T}h(x_r,\rho_r^{-1})(x_r)_t.
\]
The last term is again an element of $\h$ and we can ignore it. Finally, from (\ref{action}) we see that
\[
(\rho_r^H)^{-1}= \rho_r^{-1}\s(x_r) = \s(\rho_r^{-1}\cdot x_r)h(x_r, \rho_r^{-1}) = \s(o) h(x_r, \rho_r^{-1}) = h(x_r, \rho_r^{-1}).
\]
Therefore
\[
\left(\mathrm{T}R_{h(x_r,\rho_r^{-1})}\right) \mathrm{T}\s(o)\v_r = \left(\mathrm{T}R_{(\rho_r^H)^{-1}}\right)  \mathrm{T}\s(o)\v_r.
\]
Going back to the splitting of $N_r$ in (\ref{Nsplit}) we see that
\[
N_r^\m = \left(\mathrm{T}R_{\rho_r^H}\right) \left(\mathrm{T}R_{(\rho_r^H)^{-1}}\right)  \mathrm{T}\s(o)\v_r = \mathrm{T}\s(o)\v_r
\]
as stated in the theorem.
\end{proof}
It is often the case that conditions (\ref{Ncond}) allow us to solve explicitly for the $N_r$ directly from equation (\ref{invarev}), as we will see in the examples in the next section. Before going there we do a quick description of the invariantization of invariant maps.
We will next prove the analogous result to Theorem \ref{structeq} for invariant maps.
\begin{theorem}
Assume $F$ is an invariant map given as in (\ref{invmap}). Extend this map naturally to functions of $(x_r)$ using the relation $F(\ell((x_r))) = \ell(F((x_r)))$. We will abuse notation and denote both maps with the same letter. Then the map induced on the invariants is given by
\begin{equation}\label{invarmap}
F(K_s) = M_s^{-1} K_s M_{s+1}
\end{equation}
where $M_s = \rho_s^{-1}F(\rho_{s})$. Furthermore, if $\s$ is a section as before, and if $\rho_s = \s(x_s)\rho_s^H$ for some $\rho_s^H \in H$, then $M_s = \s(z_s)M_s^H$ where $M_s^H \in H$.
\end{theorem}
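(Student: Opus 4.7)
The plan is to mirror the proof of Theorem~\ref{structeq}, replacing the time derivative by the map $F$ and the Lie algebra element $N_s=\rho_s^{-1}(\rho_s)_t$ by the group element $M_s=\rho_s^{-1}F(\rho_s)$. The telescoping identity $K_s = \rho_s^{-1}\rho_{s+1}$ will yield the conjugation formula~(\ref{invarmap}), and the structure $M_s=\s(z_s)M_s^H$ will come from substituting $\rho_s=\s(x_s)\rho_s^H$ and invoking the section identity~(\ref{action}) at the right place.

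First I would check that the natural extension $F(\ell((x_r)))=\ell(F((x_r)))$ respects products and inverses: for group--valued functions $\ell_1,\ell_2$ of $(x_r)$ one has $F(\ell_1\ell_2)=\ell_1(F((x_r)))\,\ell_2(F((x_r)))=F(\ell_1)F(\ell_2)$ and similarly $F(\ell^{-1})=F(\ell)^{-1}$. Applying this to $K_s=\rho_s^{-1}\rho_{s+1}$ and using $F(\rho_s)=\rho_sM_s$ gives
\[
F(K_s)=F(\rho_s)^{-1}F(\rho_{s+1})=M_s^{-1}\rho_s^{-1}\rho_{s+1}M_{s+1}=M_s^{-1}K_sM_{s+1},
\]
which is~(\ref{invarmap}); this is the direct group--level analog of the infinitesimal identity $(K_s)_t=K_sN_{s+1}-N_sK_s$.

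Next I would substitute $\rho_s=\s(x_s)\rho_s^H$ into $M_s=\rho_s^{-1}F(\rho_s)$. Multiplicativity of the extension, combined with $F_s((x_r))=\rho_s\cdot z_s$ from~(\ref{invmap}), gives $F(\s(x_s))=\s(F_s((x_r)))=\s(\rho_s\cdot z_s)$. Applying~(\ref{action}) with $g=\rho_s$ and $x=z_s$ turns this into $\s(\rho_s\cdot z_s)=\rho_s\s(z_s)\,h(z_s,\rho_s)^{-1}$, so that, using $\s(x_s)^{-1}\rho_s=\rho_s^H$,
\[
M_s=(\rho_s^H)^{-1}\s(x_s)^{-1}F(\s(x_s))F(\rho_s^H)=\s(z_s)\,h(z_s,\rho_s)^{-1}F(\rho_s^H).
\]
Setting $M_s^H := h(z_s,\rho_s)^{-1}F(\rho_s^H)$ then gives $M_s=\s(z_s)M_s^H$ with $M_s^H\in H$, since both $h(z_s,\rho_s)^{-1}$ and $F(\rho_s^H)=\rho_s^H(F((x_r)))$ lie in $H$.

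The main (mild) subtlety is bookkeeping which quantities are evaluated at $(x_r)$ versus at $F((x_r))$, and being precise about the multiplicativity of the natural extension; once this is secured the proof is in fact simpler than that of Theorem~\ref{structeq}, as we operate directly at the group level and no tangent--map computation is required.
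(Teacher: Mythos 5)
Your proof is correct and follows essentially the same route as the paper's: multiplicativity of the extension $F$ gives $F(K_s)=F(\rho_s)^{-1}F(\rho_{s+1})=M_s^{-1}K_sM_{s+1}$, and the factorization $M_s=\s(z_s)M_s^H$ comes from substituting $\rho_s=\s(x_s)\rho_s^H$ and invoking (\ref{action}) exactly as the paper does. The only (immaterial) difference is that you carry the factor $h(z_s,\rho_s)^{-1}$ as it literally follows from (\ref{action}), whereas the paper writes $h(z_s,\rho_s)$; since either lies in $H$, both yield the claimed $M_s^H\in H$.
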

\begin{proof} First of all, notice that since by definition $F(\rho_s((x_r))) = \rho_s(F(x_r))$, $F(\rho_s^{-1})F(\rho_s) = F(\rho_s^{-1}\rho_s) = F(I) = I$ and so $F(\rho_s^{-1}) = F(\rho_s)^{-1}$. From here
\[
F(K_s) = F(\rho_s^{-1})F(\rho_{s+1}) = F(\rho_s)^{-1}\rho_s\rho_s^{-1}\rho_{s+1}\rho_{s+1}^{-1}F(\rho_{s+1}) = M_s^{-1} K_s M_{s+1}.
\]
Also, assuming that $\rho_s = \s(x_s)\rho_s^H$, a direct calculation using (\ref{action}) shows
\begin{eqnarray*}
M_s &=& \rho_s^{-1}F(\s(x_s)\rho_s^H) = \rho_s^{-1}\s(F_s((x_r)))F(\rho_s^H) = \rho_s^{-1} \s(\rho_s\cdot z_s) F(\rho_s^H) \\
&=& \rho_s^{-1}\rho_s\s(z_s)h(z_s, \rho_s)F(\rho_s^H) = \s(z_s)h(z_s, \rho_s)F(\rho_s^H).
\end{eqnarray*}
It suffices to call $M_s^H = h(z_s,\rho_s)F(\rho_s^H)\in H$ to conclude the theorem.
\end{proof}

\section{Completely integrable systems associated to discrete moving frames}\label{sec5}
In this section we will describe invariantizations of general invariant evolutions for our previous two examples (centro-affine and projective) and we will associate completely integrable systems to each one of them by choosing the invariant elements $\v_s$ defining the equation appropriately. In the centro-affine case we  study the invariant maps and produce a well-known biPoisson map resulting from the invariantization of a particular invariant map among polygons obtained when choosing specific values for the invariant elements $z_s$. Finally, we introduce a more complicated example, that of the homogeneous $2$- sphere $S^2 \cong\SO(3)/\SO(2)$.
\subsection{Centro-affine case}\label{centro}
\subsubsection{Invariant evolutions} In the centro-affine case (see Example \ref{ex3dot6}) the action is linear and so $d\Phi_{\rho_s}(o)\v_s = \rho_s\v_s$. The general invariant evolution is given by
\[
(x_s)_t = -\frac{v_s^1}{\deter{x_s, x_{s+1}}} x_{s+1} + v_s^2 x_s =\rho_s^{-1}\begin{pmatrix} v_s^1\\ v_s^2\end{pmatrix}
\]
where $v_s^1, v_s^2$ are arbitrary functions of the invariants previously obtained in (\ref{cainv}) (recall  that $\rho_s$ was in this case a right frame and we need a left one here). If we want to find the evolution induced on $k_s^1$ and $k_s^2$ as in (\ref{cainv}), then we will recall that the space $H$ is the isotropy subgroup of $e_2$, which is the subgroup of strictly lower triangular matrices. A section for the quotient is given by
\begin{equation}\label{affsec}
\s\begin{pmatrix} a\\ b\end{pmatrix} = \begin{pmatrix} b^{-1}& a\\ 0& b\end{pmatrix}
\end{equation}
if $b\ne 0$ (we work in a neighborhood of $o = e_2$). Clearly $\rho_s = \s(x_r) \rho_r^H$ since $\s(x_r)^{-1}\rho_s \in H$. A complement $\m$ to $\h$ is given by the upper triangular matrices. In that case
\[
d\s(o) \v_s = \begin{pmatrix}- v_s^2& v_s^1 \\ 0 & v_s^2\end{pmatrix}
\]
and so 
\[
N_s = \begin{pmatrix}-v_s^2 & v_s^1\\ \alpha_s& v_s^2\end{pmatrix}
\]
where $\alpha_s$ is still to be determined. From equation (\ref{invarev}) we have
\begin{eqnarray*}
&&\begin{pmatrix} (k_s^2)_t&-(k_s^1)_t\\ ((k_s^1)^{-1})_t & 0\end{pmatrix} =
\begin{pmatrix}k_s^2 & - k_s^1\\\frac1{k_s^1}&  0\end{pmatrix} 
\begin{pmatrix}-v_{s+1}^2&v_{s+1}^1\\ \alpha_{s+1}& v_{s+1}^2\end{pmatrix}
 - \begin{pmatrix} -v_{s}^2&v_{s}^1 \\ \alpha_{s}& v_{s}^2\end{pmatrix}
\begin{pmatrix} k_s^2 & -k_s^1\\\frac1{k_s^1}& 0\end{pmatrix}\\
&&\quad\quad =\begin{pmatrix} -k_s^2 v_{s+1}^2- k_s^1 \alpha_{s+1}+v_{s}^2  k_s^2-\frac{v_s^1}{k_s^1} 
& k_s^2 v_{s+1}^1- k_s^1 \v_{s+1} ^2-v_{s}^2  k_s^1\\ 
-\frac{v_{s+1}^2}{k_s^1}-\alpha_{s} k_s^2-\frac{v_s^2}{k_s^1} & \frac{v_{s+1}^1}{k_s^1}+\alpha_{s} k_s^1\end{pmatrix}.
\end{eqnarray*}
The entry $(2,2)$ of this system is given by $0 = k_s^1 \alpha_{s}+(k_s^1)^{-1} v_{s+1}^1$, which allows us to solve for the missing entry
\[
\alpha_s = -\frac{v_{s+1}^1}{(k_{s}^1)^2}.
\]
The other entries give us the evolution of the invariants. These are
\begin{eqnarray}\label{affine}
&&\left( \begin{array}{l}(k_s^1)_t\\(k_s^2)_t\end{array}\right)=\left(\begin{array}{c}
k_s^1 v_{s+1}^2 + k_s^1 v_s^2- k_s^2 v_{s+1}^1\\    
 -k_s^2v_{s+1}^2 - \frac{v_{s}^1}{k_s^1} + k_s^2 v_s^2 + \frac{k_s^1}{(k_{s+1}^1)^2} v_{s+2}^1                                                            
\end{array}\right):=\A \left( \begin{array}{l} v_s^2 \\v_s^1 \end{array}\right),
\end{eqnarray}
where the matrix difference operator is
\begin{eqnarray}\label{opA}
 \A=\left(\begin{array}{cc}
k_s^1 (\T+1)& -k_s^2\T \\    
 -k_s^2 (\T-1) & -\frac{1}{k_s^1} + \frac{k_s^1}{(k_{s+1}^1)^2} \T^{2}                                                            
\end{array}\right)
\end{eqnarray}
and where $\T$ is the shift operator $\T a_s = a_{s+1}$.

Let us define a  diagonal matrix  
\begin{eqnarray}\label{opP}
\mathcal{P}=\left(\begin{array}{cc} (\T-1)^{-1}k_s^1 &0 \\0& -\T^{-1}k_s^1\end{array}\right) . 
\end{eqnarray}
and compute the pseudo-difference operator
\begin{equation}\label{opAP}
\mathcal{AP}=\left(\begin{array}{cc} k^1_s(\T+1)(\T-1)^{-1}k^1_s & k^1_sk^2_s\\
- k^1_sk^2_s & \frac1{k^1_s}\T^{-1} k^1_s - k^1_s\T \frac{1}{k^1_s}
\end{array}\right)
\end{equation}
which is clearly anti-symmetric. We denote it by $\mathcal{H}[k^1_s,k^2_s]$.
\begin{theorem}
The operator $\mathcal{H}[k^1_s,k^2_s] $, given by (\ref{opAP})  is a Hamiltonian
operator. It forms a Hamiltonian pair with Hamiltonian operator 
\begin{eqnarray*}
 \HH_0[k_s^1,k_s^2]=\left(\begin{array}{cc}0 &k_s^1 \\-k_s^1 & 0\end{array}\right) .
\end{eqnarray*}
\end{theorem}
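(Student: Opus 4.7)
The plan has three stages: first verify skew-symmetry of $\mathcal{H}$, then establish that $\mathcal{H}$ satisfies the Jacobi identity (for which $\mathcal{H}_0$ is almost trivial), then show compatibility so that $\mathcal{H}+\lambda\mathcal{H}_0$ is again Hamiltonian for every $\lambda$.

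For skew-symmetry, I would compute the formal adjoint of each entry directly using $\T^\ast=\T^{-1}$. The $(1,2)$ and $(2,1)$ entries $\pm k_s^1k_s^2$ are manifestly skew. For the $(1,1)$ entry, the key identity is $(\T^{-1}-1)^{-1}(\T^{-1}+1)=-(\T+1)(\T-1)^{-1}$, which follows from $\T^{-1}-1=-\T^{-1}(\T-1)$ and $\T^{-1}+1=\T^{-1}(\T+1)$. Sandwiching with $k_s^1$ then gives the required sign. For the $(2,2)$ entry, the adjoint of $\tfrac{1}{k_s^1}\T^{-1}k_s^1$ is $k_s^1\T\tfrac{1}{k_s^1}$, and these two terms swap under taking the adjoint, picking up an overall minus sign. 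Skew-symmetry of $\mathcal{H}_0$ is immediate.

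For the Jacobi identity and the compatibility, my preferred approach is via a Miura transformation, motivated directly by the introduction's claim that the invariantized centro-affine flow can be mapped to the Toda lattice (\ref{todapq}) whose compatible Hamiltonian pair (due to Adler, reference [A]) is already known. Concretely, I would seek a Flaschka-type change of variables $(k_s^1,k_s^2)\mapsto (p_s,q_s)$ such that a distinguished choice of $(v_s^1,v_s^2)$ in (\ref{affine}) reproduces (\ref{todapq}). Under this change, one then transports the standard Toda Hamiltonian pair back to $(k_s^1,k_s^2)$ coordinates and verifies that it agrees, up to a scalar, with $(\mathcal{H},\mathcal{H}_0)$. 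Since the Jacobi identity and mutual compatibility are invariant under an invertible local change of variables, this would establish the claim. As an independent check one could also verify that $\mathcal{H}_0$ is obtained from the Leibniz rule for a simple Lie-Poisson structure on the loop algebra underlying the frame, which makes its Hamiltonian nature transparent, and that $\mathcal{H}$ coincides with the reduction of a known second structure on the same loop algebra.

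The main obstacle is the nonlocal operator $(\T-1)^{-1}$ occurring in $\mathcal{H}$. This requires either restricting to the quotient of functionals modulo constants so that $(\T-1)^{-1}$ is well-defined, or working on the space of $N$-periodic invariants of twisted $N$-gons where $(\T-1)^{-1}$ acts on the codimension-one subspace of mean-zero sequences. Either way, one must verify that the Miura push-forward lands in the correct functional space and that the pseudo-difference conjugation $J\,\mathcal{H}\,J^\ast$, where $J$ is the Jacobian of the Miura map, cancels the $(\T-1)^{-1}$ against factors coming from the map. A fallback strategy, if the Miura identification proves elusive, is to compute the Schouten bracket of the associated bi-vector $\tfrac12\theta\wedge\mathcal{H}\theta$ (and likewise for $\mathcal{H}+\lambda\mathcal{H}_0$) in the functional multi-vector calculus for pseudo-difference operators; this is mechanical but lengthy, since one must track all terms arising from $(\T-1)^{-1}$ acting on products.
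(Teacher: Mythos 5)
Your proposal is correct and takes essentially the same route as the paper: the paper's proof introduces precisely the Miura transformation $p_s=k_s^1/k_{s+1}^1$, $q_s=k_s^2$, computes its Fr\'echet derivative $D_{(p_s,q_s)}$, and shows that $D\,\HH_0\,D^\star$ and $D\,\HH\,D^\star$ are the known compatible Hamiltonian pair for the Toda lattice (\ref{todapq}) in Flaschka coordinates, exactly the identification you propose. The only differences are cosmetic: the paper pushes the pair forward rather than pulling the Toda pair back, and it asserts the anti-symmetry of (\ref{opAP}) without the explicit adjoint computation you outline.
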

\begin{proof} 
Let us introduce the following Miura transformation
\begin{eqnarray}\label{miura2}
p_s=\frac{k_s^1}{k_{s+1}^1}, \qquad q_s=k_s^2 .
\end{eqnarray}
Its Fr{\'e}chet derivative is
\begin{eqnarray*}
D_{(p_s,q_s)}=\left(\begin{array}{cc} 
\frac{1}{k_{s+1}^1}-\frac{k_s^1}{(k_{s+1}^1)^2} \T&0\\0&1 \end{array} 
\right)=
\left(\begin{array}{cc} \frac{1}{k_{s+1}^1}-p_s \T \frac{1}{k_{s}^1} 
&0\\0&1 \end{array} \right).
\end{eqnarray*}
Under this transformation the operators $\HH_0[k_s^1,k_s^2]$ and $\HH[k_s^1,k_s^2]$ become
\begin{eqnarray*}
&&{\tilde \HH}_0[p_s,q_s]=D_{(p_s,q_s)} \HH_0[k_s^1,k_s^2] D_{(p_s,q_s)}^\star
=\left(\begin{array}{cc} 0 &p_s 
(1-\T)\\-(1-\T^{-1}) p_s&0\end{array} \right) 
\end{eqnarray*}
and 
\begin{eqnarray*}
&&{\tilde \HH}[p_s,q_s]=D_{(p_s,q_s)} \HH[k_s^1,k_s^2] D_{(p_s,q_s)}^\star\\
&& =
\left(\!\!\!\!\begin{array}{cc} \frac{1}{k_{s+1}^1}\!-\!p_s \T 
\frac{1}{k_{s}^1} &0\\0&1 \end{array}\!\!\!\! \right)
\left(\!\!\!\!\begin{array}{cc} k_s^1 
(\T\!+\!1)(\T\!-\!1)^{\!-\!1}k_s^1 & k_s^1 q_s \\ -k_s^1 q_s
& \T^{\!-\!1} p_s\! -\!p_s\T \end{array}\!\!\!\!\right)
\left(\!\!\!\!\begin{array}{cc} \frac{1}{k_{s+1}^1}\!-\! 
\frac{1}{k_{s}^1} \T^{\!-\!1} p_s &0\\0&1 \end{array}\!\!\!\! \right)\\
&& =\left(\begin{array}{cc} p_s \T^{-1} p_s -p_s \T p_s &p_s 
(1-\T)q_s\\-q_s(1-\T^{-1}) p_s& \T^{-1} p_s -p_s\T\end{array} \right) .
\end{eqnarray*}
These two operators form a Hamiltonian pair for the  well-known Toda lattice (\ref{todapq}) in Flaschka coordinates \cite{A,suris03}. Indeed, we have
\begin{eqnarray}\label{toda}
&&\left( \begin{array}{l}(p_s)_t\\(q_s)_t\end{array}\right)=\left(\begin{array}{c}
p_s (q_s-q_{s+1})\\    p_{s-1}-p_s\end{array}\right)={\tilde \HH} \delta q_s
={\tilde \HH}_0 \delta (\left( \frac{1}{2}q_s^2+p_s\right)
\end{eqnarray}
and thus we proved the statement.
\end{proof}
If we take $\left(\! \begin{array}{ll} v_s^2,& v_s^1 \end{array}\!\right)=\left(\! \begin{array}{ll} 0,&
-k_{s-1}^1 \end{array}\!\right)$ in (\ref{affine}), 
then the evolution of the invariants for $k_s^1$ and $k_s^2$ becomes
\begin{eqnarray*}
&&\left( \begin{array}{l}(k_s^1)_t\\(k_s^2)_t\end{array}\right)=\left(\begin{array}{c}
 k_s^1 k_s^2\\    
\frac{k_{s-1}^1}{k_s^1}-\frac{k_{s}^1 }{k_{s+1}^1}                                               
\end{array}\right) =\HH_0 \delta \left( \frac{(k_s^2)^2}{2}+\frac{k_s^1}{k_{s+1}^1}\right)
=\HH \delta  k_s^2,
\end{eqnarray*}
which is an integrable {differential} difference equation and can be transformed into Toda lattice (\ref{toda})
under the transformation (\ref{miura2}).  We summarize our results as a Theorem.

\begin{theorem}{ The evolution of polygons in the centro-affine plane described by the equation
\[
(x_s)_t = \frac{k^1_{s-1}}{|x_s,x_{s+1}|} x_{s+1}
\]
induces a completely integrable system in its curvatures $k_s^1, k_s^2$ equivalent to the Toda Lattice.}
\end{theorem}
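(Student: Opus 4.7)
The statement is essentially a corollary that ties together the calculations already performed, so my proof plan would follow the calculational scaffolding already in place.

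First, I would verify that the given polygon evolution arises from the general invariant evolution formula
\[
(x_s)_t = -\frac{v_s^1}{\deter{x_s,x_{s+1}}} x_{s+1} + v_s^2 x_s
\]
by making the specific choice $v_s^2 = 0$, $v_s^1 = -k_{s-1}^1$. Substituting these into the formula immediately yields $(x_s)_t = \frac{k_{s-1}^1}{\deter{x_s,x_{s+1}}} x_{s+1}$, which matches the statement. Since $k_{s-1}^1 = \deter{x_{s-1}, x_s}$ is a scalar invariant, the vector $\v_s = (v_s^1, v_s^2)^T$ has invariant components, so the evolution is indeed $G$-invariant, and Theorem~\ref{structeq} applies.

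Next, I would feed these same values of $v_s^i$ into the invariantization formula (\ref{affine}). The first row gives $(k_s^1)_t = k_s^1 \cdot 0 + k_s^1 \cdot 0 - k_s^2(-k_s^1) = k_s^1 k_s^2$. The second row gives $(k_s^2)_t = 0 - \frac{-k_{s-1}^1}{k_s^1} + 0 + \frac{k_s^1}{(k_{s+1}^1)^2}(-k_{s+1}^1) = \frac{k_{s-1}^1}{k_s^1} - \frac{k_s^1}{k_{s+1}^1}$. This reproduces the invariantized system displayed just before the theorem statement, confirming the induced evolution on the Maurer--Cartan invariants.

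Finally, I would invoke the Miura transformation (\ref{miura2}), $p_s = k_s^1/k_{s+1}^1$, $q_s = k_s^2$. A direct computation of $(p_s)_t$ via the chain rule using $(k_s^1)_t = k_s^1 k_s^2$ gives $(p_s)_t = p_s(q_s - q_{s+1})$, and $(q_s)_t = (k_s^2)_t = p_{s-1} - p_s$. These are precisely the Toda equations in Flaschka coordinates (\ref{todapq}). Complete integrability then follows directly from the preceding theorem, which established that the pushed-forward operators $\tilde{\HH}_0[p_s,q_s]$ and $\tilde{\HH}[p_s,q_s]$ form a Hamiltonian pair for this system.

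Since each step is a direct verification or an application of a theorem already proved, I do not anticipate any serious obstacle; the only mild care needed is in the bookkeeping of signs and shifts in the matrix operator $\A$ from (\ref{opA}) when substituting $v_s^1 = -k_{s-1}^1$, since this index shift is what produces the nontrivial $p_{s-1} - p_s$ coupling after the Miura map. The proof is thus essentially a synthesis: the polygon evolution specializes the general invariant flow, its invariantization is computed from (\ref{affine}), and the Miura transformation converts it into the Toda Lattice whose biHamiltonian structure has already been exhibited.
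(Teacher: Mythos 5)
Your proposal is correct and follows essentially the same route as the paper: the authors likewise specialize $(v_s^2, v_s^1)=(0,-k_{s-1}^1)$ in the invariantization formula (\ref{affine}), obtain $(k_s^1)_t = k_s^1 k_s^2$, $(k_s^2)_t = k_{s-1}^1/k_s^1 - k_s^1/k_{s+1}^1$, and identify this as biHamiltonian (writing it as $\HH_0\,\delta(\tfrac12 (k_s^2)^2 + k_s^1/k_{s+1}^1) = \HH\,\delta k_s^2$) and as the Toda lattice under the Miura transformation (\ref{miura2}). Your sign and shift bookkeeping ($v_{s+1}^1=-k_s^1$, $v_{s+2}^1=-k_{s+1}^1$) is accurate.
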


The pseudo-difference operator $\P$ is really a formal operator as $\T-1$ is not invertible in the periodic case. It means that we apply the operator only to Hamiltonians whose gradients are in the image of $\T-1$, as is the case here.

\subsubsection{ Invariant maps}\label{invmapaswell}

In the centro-affine case an invariant map is of the form
\[
F(x_s) = -\frac{z_s^1}{\deter{x_s, x_{s+1}}} x_{s+1} + z_s^2x_s = \rho_s^{-1}\begin{pmatrix} z_s^1\\ z_s^2\end{pmatrix}.
\]
Although in this simple case we could directly find the transformation of the invariants, we will follow the process described { in Section \ref{sec4}}. Our example in Section \ref{sphere} will provide a stronger case for the effectiveness of the method. Using section (\ref{affsec}), the transformation under $F$ of $k_s^1$ and $k_s^2$ will be built up using the matrix $M_s = -\rho_{s+1}\rho_s^{-1}$ given by
\[
M_s = \begin{pmatrix} (z_s^2)^{-1} & z_s^1\\ 0&z_s^2\end{pmatrix} \begin{pmatrix} 1&0\\ \alpha_s&1\end{pmatrix}
\]
where $\alpha_s$ needs to be found. The equations relating $K_s$ and $M_s$ are given by
\[
\begin{array}{l}
F\begin{pmatrix} k_s^2&-k_s^1\\ (k_s^1)^{-1} & 0\end{pmatrix}
\\ \quad = \begin{pmatrix} 1&0\\ -\alpha_s&1\end{pmatrix}\begin{pmatrix} z_s^2 & -z_s^1\\ 0&(z_s^2)^{-1}\end{pmatrix}\begin{pmatrix} k_s^2&-k_s^1\\ (k_s^1)^{-1} & 0\end{pmatrix} \begin{pmatrix} (z_{s+1}^2)^{-1} & z_{s+1}^1\\ 0&z_{s+1}^2\end{pmatrix} \begin{pmatrix} 1&0\\ \alpha_{s+1}&1\end{pmatrix}.
\end{array}
\]
The $(2,2)$ entry of this equation will allow us to solve for $\alpha_s$. It is given by
\[
\alpha_{s} \left(\frac1{k_s^1}z_s^1z_{s+1}^1 + z_{s}^2(z_{s+1}^2k_s^1-z_{s+1}^1k_s^2)\right) + \frac{z_{s+1}^1}{z_{s}^2 k_s^1} = 0.
\]
 The other entries of the system will solve for the transformation of $k_s^1$ and $k_s^2$. They are given by
 \begin{eqnarray*}
 F(k_s^1) &=& \frac1{k_s^1}z_s^1z_{s+1}^1+(z_{s+1}^2k_s^1-z_{s+1}^1k_s^2)z_{s}^2\\ 
 F(k_s^2) &=& \left((k_s^2z_{s+1}^1-z_{s+1}^2k_s^1)z_{s}^2 - \frac1{k_s^1}z_s^1z_{s+1}^1\right)\alpha_{s+1} - \frac{z_{s}^1}{z_{s+1}^2k_s^1}+\frac{z_{s}^2k_s^2}{z_{s+1}^2}
 \end{eqnarray*}
 Now we take $z_s^2=\frac{1}{c}$, where $c\neq 0$ is constant and $z_s^1$ 
satisfying the relation
\begin{eqnarray}\label{consz}
\frac1{k_s^1}z_s^1z_{s+1}^1+\frac{k_s^1}{c^2}-\frac{z_{s+1}^1k_s^2}{c}=z_{s+1}^1
\end{eqnarray}
Then the above maps become
  \begin{eqnarray}
  F(k_s^1) &=& z_{s+1}^1\label{mapk1}\\
  F(k_s^2) &=& c\left( \frac{z_{s+1}^1}{k_{s+1}^1}- \frac{\ 
z_{s}^1}{k_s^1}\right)+k_s^2. \label{mapk2}
  \end{eqnarray}
Let $a_s=\frac{k_s^1}{k_{s+1}^1}$, $b_s=k_s^2$ and 
$\beta_s=\frac{k_s^1}{z_{s+1}^1}$. Then using (\ref{mapk1}) and 
(\ref{mapk2}) we have
\begin{eqnarray*}
F(a_s)&=&\frac{F(k_s^1)}{F(k_{s+1}^1)}=\frac{k_s^1}{\beta_s} 
\frac{\beta_{s+1}}{k_{s+1}^1}=a_s \frac{\beta_{s+1}}{\beta_{s}}\\
F(b_s)&=& b_s+c\left( \frac{k_s^1}{k_{s+1}^1 
\beta_{s}}-\frac{k_{s-1}^1}{k_s^1 \beta_{s-1}}\right)
= b_s+c\left( \frac{a_s}{ \beta_{s}}-\frac{a_{s-1}}{\beta_{s-1}}\right)\ .
\end{eqnarray*}
The constraint (\ref{consz}) on $z_s^1$ becomes
\begin{eqnarray*}
\frac{a_{s-1}}{\beta_{s-1}\beta_s}+\frac{1}{c^2}-\frac{b_s}{c\ 
\beta_{s}} =\frac{1}{\beta_{s}},
\end{eqnarray*}
that is,
\begin{eqnarray*}
\beta_s=1+c\ b_s-c^2 \frac{a_{s-1}}{\beta_{s-1}}.
\end{eqnarray*}
Thus we obtain the integrable discretization of the Toda lattice as the formulas 
(3.8.2) and (3.8.3) in \cite{suris03}. { Thus we obtain the following result:
\begin{theorem}{ The invariant map of polygons in the centro-affine plane described by the equation
\[
F(x_s) = -\frac{z_s^1}{\deter{x_s, x_{s+1}}} x_{s+1} + \frac{x_s}{c},
\]
where $c\neq 0$ is constant and $z_s^1$ satisfying (\ref{consz})
induces a completely integrable map in its curvatures $k_s^1, k_s^2$ equivalent to the integrable discretization of the Toda Lattice.}
\end{theorem}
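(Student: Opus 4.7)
The plan is to apply the general invariantization theorem for invariant maps to the specific choice $z_s = (z_s^1, 1/c)^T$ and check that the induced map on $(k_s^1, k_s^2)$ coincides with the known integrable time-discretisation of the Toda lattice. All the structural work is already in place: the previous theorems give the form of any invariant map as $F(x_s) = \rho_s^{-1}\cdot z_s$ and the invariantization as $F(K_s) = M_s^{-1} K_s M_{s+1}$, where $M_s = \rho_s^{-1} F(\rho_s)$ factors through the section $\s$ as $M_s = \s(z_s) M_s^H$ with $M_s^H\in H$ a strictly lower-triangular matrix of the form $\begin{pmatrix}1&0\\\alpha_s&1\end{pmatrix}$.

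First I would substitute the section (\ref{affsec}) with $z_s^2 = 1/c$ into the explicit formula for $M_s$, and then write out the matrix equation $F(K_s) = M_s^{-1}K_s M_{s+1}$. The $(2,2)$ entry of this equation is linear in $\alpha_s$ and determines it uniquely, exactly as in the general map computation carried out earlier in the section. The remaining three entries then yield the transformation formulas for $k_s^1$ and $k_s^2$; after imposing the constraint (\ref{consz}) these collapse to (\ref{mapk1})--(\ref{mapk2}), namely $F(k_s^1) = z_{s+1}^1$ and $F(k_s^2) = k_s^2 + c(z_{s+1}^1/k_{s+1}^1 - z_s^1/k_s^1)$.

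Next I would pass to the Flaschka-type variables $a_s = k_s^1/k_{s+1}^1$ and $b_s = k_s^2$, which are the Miura variables $(p_s,q_s)$ already used in (\ref{miura2}) to reduce the continuous flow to the standard Toda lattice, and introduce the auxiliary quantity $\beta_s = k_s^1/z_{s+1}^1$. Direct computation from (\ref{mapk1})--(\ref{mapk2}) then gives $F(a_s) = a_s \beta_{s+1}/\beta_s$ and $F(b_s) = b_s + c(a_s/\beta_s - a_{s-1}/\beta_{s-1})$, which is precisely the form of the Suris map (\ref{todamap}). Finally the constraint (\ref{consz}), rewritten in the $(a,b,\beta)$-variables, becomes the recurrence $\beta_s = 1 + c\,b_s - c^2 a_{s-1}/\beta_{s-1}$, matching the auxiliary recurrence that defines $\beta$ in the integrable discrete Toda map as stated in the Introduction and in formulas (3.8.2)--(3.8.3) of \cite{suris03}.

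The main obstacle is not really computational but conceptual: one must choose the invariant element $z_s$ so that the resulting map on the Maurer--Cartan invariants actually lands on the integrable Toda discretization rather than on an arbitrary invariant endomorphism of the space of polygons. The role of the constraint (\ref{consz}) is exactly to ensure this; verifying that this constraint is compatible (i.e.\ that $z_s^1$ can be solved recursively from it for generic data) and that the resulting $\beta_s$-recurrence matches \cite{suris03} is where care is required, but once the Miura variables $(a_s,b_s)$ and the auxiliary $\beta_s$ are identified the identification with (\ref{todamap}) is immediate and integrability is then inherited from the known integrability of that map.
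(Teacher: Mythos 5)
Your proposal follows essentially the same route as the paper: the general invariantization formula $F(K_s)=M_s^{-1}K_sM_{s+1}$ with $M_s=\s(z_s)M_s^H$, solving the $(2,2)$ entry for $\alpha_s$, specialising to $z_s^2=1/c$ with the constraint (\ref{consz}) to obtain (\ref{mapk1})--(\ref{mapk2}), and then passing to $a_s=k_s^1/k_{s+1}^1$, $b_s=k_s^2$, $\beta_s=k_s^1/z_{s+1}^1$ to recover the Suris map (\ref{todamap}) and its $\beta$-recurrence. This matches the paper's argument step for step, so it is correct.
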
}

\subsection{ The projective case}
In the projective example 
the subgroup $H$ is the isotropy subgroup of $0$, that is, $H$ is given by lower triangular matrices. Thus, a section can be chosen to be
\[
\s(x) = \begin{pmatrix} 1&x\\ 0&1\end{pmatrix}.
\]
One can check directly that $\s(x_r)^{-1}\rho_r \in H$ and also
\[
d\s(0) \v = \begin{pmatrix} 0&\v\\ 0&0\end{pmatrix}. 
\]
A general invariant evolution in this case would be of the form
\begin{equation}\label{invev2}
(x_s)_t = d\Phi_{\rho_s}(0) \v_s = \frac1{a_s^2} \v_s
\end{equation}
where $\v_s$ is a function of $(k_r)$ and where $a_s$ is the $(2,2)$ entry of $\rho_s$. The evolution induced on $k_s$ is given by the equation (\ref{invarev}) with 
\[
N_s = \begin{pmatrix} \alpha_s& \v_s\\ \beta_s&-\alpha_s\end{pmatrix}.
\]
We have
\[
\begin{pmatrix}  (k_s)_t&0\\ 0&0\end{pmatrix} = \begin{pmatrix} k_s&1\\ -1&0\end{pmatrix}\begin{pmatrix} \alpha_{s+1}& \v_{s+1}\\ \beta_{s+1}&-\alpha_{s+1}\end{pmatrix} - \begin{pmatrix} \alpha_s& \v_s\\ \beta_s&-\alpha_s\end{pmatrix}\begin{pmatrix} k_s&1\\ -1&0\end{pmatrix}.
\]
The entries $(2,2)$ and $(1,2)$ of this system will allow us to solve for $\beta_r$ and $\alpha_r$. Indeed, the $(2,2)$ entry is given by
\[
0 = -\beta_{s}-\v_{s+1}
\]
and so 
\[
\beta_s = -\v_{s+1}.
\]
The $(1,2)$ entry is given by
\[
\alpha_{s+1}+\alpha_s-k_s\v_{s+1} = 0.
\]
This leads to 
$$
\alpha_s=(\T+1)^{-1} k_s \v_{s+1},
$$
where $\T$ is the shift operator. 
Notice that $\T+1$ is an invertible operator. Assuming $N$ is not even we can solve the equation $(\T+1)\alpha_s = w_s$ to obtain
\[
\alpha_s =\frac{ (-1)^s}{2}\left(  \sum_{i=0}^{s-1}(-1)^{i+1} w_i-\sum_{i=s}^{n-1}(-1)^{i+1}w_i \right).
\]
Finally, the $(1,1)$ entry gives the evolution for $k_s$
\begin{eqnarray}
 (k_s)_t &=& \v_{s}-\v_{s+2}+k_s ( \alpha_{s+1}-\alpha_s)\nonumber\\
&=&\left(\T^{-1}-\T+k_s (\T-1)(\T+1)^{-1} k_s\right) \v_{s+1} \ .\label{invarevproj}
\end{eqnarray}
\begin{theorem}
The anti-symmetric operators $$\HH_1[k_s] =\T-\T^{-1}$$ and $$\HH_2[k_s]=k_s (\T-1)(\T+1)^{-1} k_s$$ form a Hamiltonian pair.
\end{theorem}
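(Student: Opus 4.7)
The plan is to verify the three defining properties of a Hamiltonian pair in turn: skew-symmetry of each operator, the Jacobi identity for each, and compatibility of $\HH_1$ and $\HH_2$ (equivalently, that $\HH_1+\lambda\HH_2$ is Hamiltonian for all $\lambda\in\R$).

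For skew-symmetry, $\HH_1^*=-\HH_1$ is immediate from $\T^*=\T^{-1}$. For $\HH_2$ I would use $(\T\pm 1)^*=\T^{-1}\pm 1$, the factorizations $\T^{-1}-1=-\T^{-1}(\T-1)$ and $\T^{-1}+1=\T^{-1}(\T+1)$, and the resulting identity $((\T+1)^{-1})^*=(\T+1)^{-1}\T$. Since $(\T-1)$ and $(\T+1)^{-1}$ are constant-coefficient and hence commute, a short computation yields $\HH_2^*=-k_s(\T-1)(\T+1)^{-1}k_s=-\HH_2$.

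The operator $\HH_1$ has constant coefficients, so its Jacobi identity is trivial, and any constant-coefficient skew-symmetric operator is automatically compatible with every Hamiltonian operator; hence the only substantive content is to show that $\HH_2$ is Hamiltonian. For this I would follow the template used in the centro-affine case (Theorem~5.1): introduce a scalar Miura transformation $k_s=F(p_s,p_{s\pm1},\dots)$ under which the invariantized flow (\ref{invarevproj}), for an appropriate choice of $\v_{s+1}$, becomes the modified Volterra lattice (\ref{volt}). The natural candidates suggested by the form of $\HH_2$ are $k_s=p_s+p_{s-1}$, $k_s=p_sp_{s-1}$, or a reciprocal/logarithmic variant; one then computes the Fréchet derivative $D$ of $F$ and verifies by direct substitution that $D\,\HH_1\,D^*$ and $D\,\HH_2\,D^*$ reproduce, up to an overall constant, the Kupershmidt Hamiltonian pair for (\ref{volt}) from \cite{kp85}. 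Since both the Hamiltonian property and compatibility are preserved under Fréchet conjugation by an invertible Miura, this transports the known statements for modified Volterra back to $(\HH_1,\HH_2)$.

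The main obstacle I expect is the identification of the correct Miura transformation and the verification that a single map simultaneously conjugates \emph{both} $\HH_1$ and $\HH_2$ into the Kupershmidt pair (as opposed to only matching one of them). If no convenient Miura presents itself, the fallback is a direct verification of the Schouten brackets $[\HH_2,\HH_2]=0$ and $[\HH_1,\HH_2]=0$ in the variational bivector calculus for difference operators. The computation is finite because the only non-local ingredient is the single operator $(\T+1)^{-1}$, which can always be eliminated through the identity $(\T+1)(\T+1)^{-1}=\mathrm{id}$ together with integration by parts under the formal sum; this reduces the bracket calculations to a finite collection of polynomial identities in $k_s$ and its shifts, which can be checked term by term.
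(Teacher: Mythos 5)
Your main route is the paper's own: the proof there introduces precisely the ``reciprocal variant'' you list, namely $u_s=1/k_s$, computes the Fr\'echet derivative $D_{k_s}=-1/u_s^2$, and checks that conjugation carries $\HH_1$ to $u_s^2(\T-\T^{-1})u_s^2$ and $\HH_2$ to $u_s(\T-1)(\T+1)^{-1}u_s$, which is the known Kupershmidt Hamiltonian pair for the modified Volterra lattice (\ref{volt}); Hamiltonicity and compatibility are then pulled back through this invertible change of variables exactly as you describe. (The only cosmetic difference is that the paper writes the conjugation as $D^{-1}\HH D^{\star-1}$, since it takes $D$ to be the derivative of $k_s$ with respect to $u_s$.) Because the transformation is a point transformation, the difficulty you anticipate with multi-point candidates such as $k_s=p_sp_{s-1}$ --- whose Fr\'echet derivatives are non-invertible difference operators, so that conjugation only pushes Hamiltonicity forward and cannot pull it back --- simply does not arise.

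One assertion in your reduction is wrong and worth flagging, even though the argument you actually execute does not lean on it in the end: it is not true that a constant-coefficient skew-symmetric operator is automatically compatible with every Hamiltonian operator. Constancy of $\HH_1$ does make its own Jacobi identity trivial, but compatibility is the vanishing of the mixed Schouten bracket $[\HH_1,\HH_2]$, whose surviving contribution is the cyclic sum of $\langle D_{\HH_2}[\HH_1\xi]\eta,\zeta\rangle$ --- a genuine condition on $\HH_2$, not on $\HH_1$. The finite-dimensional analogue makes this transparent: a constant bivector on $\g^*$ is compatible with the Lie--Poisson structure if and only if it is a Lie-algebra $2$-cocycle, and not every skew bivector is one (for instance on a filiform nilpotent algebra). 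So ``the only substantive content is that $\HH_2$ is Hamiltonian'' is an over-reduction; compatibility must be proved, and in your write-up it is in fact proved only by the final Miura step that transports the whole pencil to the known modified Volterra pair. Keep that step and drop the claim of automatic compatibility.
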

\begin{proof}
Let us introduce the following Miura transformation 
\begin{eqnarray}\label{miura}
u_s=\frac{1}{k_s} \ . 
\end{eqnarray}
We compute its Fr{\'e}chet derivative $D_{k_s}=-1/u_s^2$ . 
Under the transformation (\ref{miura}) the operators $\HH_1[k_s]$ and $\HH_2[k_s]$ become
\begin{eqnarray*}
&&\tilde \HH_1[u_s]=D_{k_s}^{-1} \HH_1[k_s] D_{k_s}^{\star -1}  =u_s^2 \left(\T-\T^{-1}\right) u_s^2; \\
&&\tilde \HH_2[u_s]=D_{k_s}^{-1} \HH_2[k_s] D_{k_s}^{\star -1} = u_s (\T-1)(\T+1)^{-1} u_s .
\end{eqnarray*}
These two operators form a Hamiltonian pair for the  well-known modified Volterra lattice (\ref{volt}) \cite{kp85}.
\begin{eqnarray}\label{vol}
 (u_s)_t=u_s^2 (u_{s+1}-u_{s-1})= \tilde \HH_1[u_s] \delta_{u_s} \ln u_s =\tilde \HH_2[u_s]  \delta_{u_s} (u_s u_{s-1}). 
\end{eqnarray}
This implies that $\HH_1[k_s]$ and $\HH_2[k_s]$ form a Hamiltonian pair and the statement is proved.
\end{proof}
Take $\v_{s+1}=\delta_{k_s} \ln k_s$ in (\ref{invarevproj}). The evolution for $k_s$ becomes
\begin{eqnarray*}
(k_s)_t &=& \frac{1}{k_{s-1}}-\frac{1}{k_{s+1}} ,
\end{eqnarray*}
which is an integrable difference equation. Under the transformation (\ref{miura}), it leads to the modified Volterra 
lattice (\ref{vol}).
\vskip 2ex

 At this moment it seems natural to wonder if one could also obtain this modified Volterra evolution in the centro-affine case (\ref{centro}), via a reduction to the space $k_s^1 = 1$ for all $s$. And indeed such is the case.

Assume $k_s^1 = 1$ and choose those evolutions that leave $k_s^1$ invariant. That is, assume $(k_s^1)_t = 0$. Then, from (\ref{affine}) we have
\[
v_{s+1}^2 + v_s^2 - k_s^2 v_{s+1}^1 = 0
\]
and from here
\[
v_s^2 = (\T+1)^{-1} k_s^2 v_{s+1}^1.
\]
Substituting in (\ref{affine}) we get
\[
(k_s^2)_t = -k_s^2\T(\T+1)^{-1} k_s^2 \T v_s^1 - v_s^1 + k_s^2 (\T+1)^{-1} k_s^2 \T v_s^1 + \T^2 v_s^1
\]
\[
= \left[\T-\T^{-1} - k_s^2(\T-1)(\T+1)^{-1}k_s^2\right] \T v_s^1,
\]
which is identical to the projective one with identifications $k_s =  k_s^2$ and $\v_s = -v_s^1$. Hence, choosing $\T v_s^1 = -\delta_{k_s^2} \ln |k_s^2|$ will result in a modified Volterra equation for the centro-affine invariant $k_s^2$, as far as the initial polygon satisfy $k_s^1 = 1$, that is, as far as the area of the parallelogram formed by $x_s$ and $x_{s+1}$ in the plane is equal $1$ for all $s$ and they are properly oriented. Other constant values can be chosen with minimal changes. Notice that this evolution is well-defined in the periodic case.
\vskip 2ex


\subsection{ The homogenous sphere $S^2 \cong\SO(3)/\SO(2)$}\label{sphere}

{In this section we consider invariant evolutions on the homogeneous sphere, $S^2 \cong\SO(3)/\SO(2)$. We first consider a local section using which we can describe our calculations of a discrete frame, the associated discrete Maurer--Cartan matrices, and invariantizations. We then show that, for a certain choice of polygon evolution the resulting curvature flow is integrable, of Volterra type. }

If $G = \SO(3)$ and $H = \SO(2)$, we consider the following splitting of the Lie algebra into subspaces $\so(3) = \m\oplus \h$ with
\begin{equation}
 \begin{pmatrix} 0&y\\ -y^T&0\end{pmatrix}\in \m\hskip 2ex  \begin{pmatrix} A &0\\ 0&0\end{pmatrix} \in \h 
\end{equation}
where $y \in \R^2$ and $A\in\so(2)$. Associated to this splitting we have a local factorization in the group into factors belonging to $H = \SO(2)$ and  $\exp(\m)$. This factorization is given by
\begin{equation}\label{gson+1}
 g = g(\Theta, y) = \begin{pmatrix} \Theta & 0\\ 0&1\end{pmatrix} \begin{pmatrix} I + \cos_y y y^T & \sin_y y\\ -\sin_y y^T & \cos(\norm y)\end{pmatrix}
 \end{equation}
 where $\displaystyle\cos_y = \frac{\cos(\norm{y})-1}{\norm{y}^2}$, $\displaystyle\sin_y = \frac{\sin(\norm{y})}{\norm{y}}$ and $\norm y^2 = y^T y$. The factorization exists locally in a neighborhood of the identity.
 
 Let $\s: S^2 \to \SO(3)$ be the section defined by the exponential, that is
 \[
 \s(x) =\exp \begin{pmatrix} 0&x\\ -x^T&0\end{pmatrix} = \begin{pmatrix} I + \cos_x x x^T & \sin_x x\\ -\sin_xx^T & \cos(\norm x)\end{pmatrix},
\]
where we are using local coordinates around the south pole (whose coordinates are zero). One clearly has that $d \s(o): T_oS^2 \to \m$ is an isomorphism given by
\[
d \s(o) v = \begin{pmatrix} 0&v\\- v^T & 0\end{pmatrix}.
\]
The action of $\SO(3)$ on the sphere associated to this section, let's denote it by $g\cdot x$, is determined by the relation
\[
g \s(x) = \s(g\cdot x) h
\]
for some $h\in \SO(2)$ which is also determined by this relation. Let $g$ be as in (\ref{gson+1}). Straightforward calculations show that, if $\eta = g\cdot x$, then
\begin{equation}\label{soact1}
\sin_\eta \eta =  \sin_x\Theta x+\left(\cos_y\sin_x y^T x + \sin_y\cos(\norm{x})\right)\Theta y
\end{equation}
and
\begin{equation}\label{soact2}
\cos(\norm{\eta}) = \cos(\norm{y})\cos(\norm{x}) - \sin_x\sin_y y^T x.
\end{equation}
The last equation can be expressed in terms of the cosine of a certain angle using the spherical law of cosines, as we will see later. To have a better idea of what the coordinates given by this section are, recall the standard geometric identification of $\SO(3)/\SO(2)$ with the sphere: 

Given an element in $\SO(3)$ we can identify the last column with a point on the sphere and the first two columns as vectors tangent to the sphere at the point. The element of $\SO(2)$ acts on the two tangent vectors. With this identification, our coordinates result on 
\[
p = \begin{pmatrix} {\sin x}\frac x{\norm{x}}\\ \cos(\norm{x})\end{pmatrix} 
\]
being the point on the sphere. If we consider $\theta$ and $\phi$ to be the standard spherical angles, then $\norm{x} = \phi$ and $\frac{x}{\norm{x}} = \begin{pmatrix}\cos\theta\\ \sin\theta\end{pmatrix}$. Therefore, the coordinates $x$ describe the projection of $p$ on the $xy$ plane, multiplied by 
the angle $\phi$. See the picture below. This might seem as a cumbersome choice, but the advantages in calculations will be worthy, plus our Serret--Frenet equations will look very familiar to the reader.

\begin{figure}[tbh]
\centering
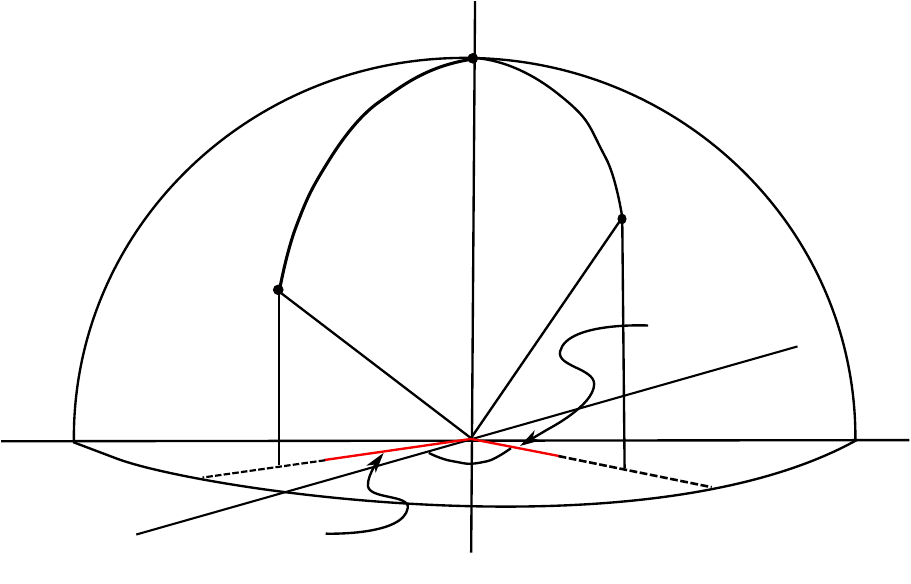\caption{Local coordinates on the sphere}
\end{figure}

 \subsubsection{Moving frame and invariants}
Next we will determine a moving frame using the normalization constants $c_r^r = 0$ and $c_{r+1}^r = a_r e_1$, where $a_r$ is an invariant still to be identified. Assume $\rho_r = g(\Theta_r, y_r)$ as in (\ref{gson+1}). The equation
\[
\rho_r \cdot x_r = 0
\]
determines the choice $y_r = -x_r$, and $\rho_r\cdot x_{r+1}$ is determined by the equation $\rho_r\cdot x_{r+1} = a_r e_1$ implicitly written as
\begin{equation}\label{ar}
\sin_{a_r e_1} a_r e_1 = \Theta_r\left(\sin_{x_{r+1}}x_{r+1}+(\cos_{x_r}\sin_{x_{r+1}}x_r^Tx_{r+1}-\sin_{x_r}\cos(\norm{x_{r+1}}))x_r\right).
\end{equation}
The invariant $a_r$ is determined by the condition $\Theta_r \in \SO(2)$, while this equation determines $\Theta_r$. If we impose the further condition $a_r>0$ for all $r$, then $\Theta_r$ is uniquely determined (it might belong to different connected components of $\SO(2)$ for different $r$'s). 

\begin{theorem} The right Maurer--Cartan matrices associated to $\rho_r$ as above are given by
\begin{equation}\label{sphereKr}
K_r = \exp\begin{pmatrix}0&k_r&0\\ -k_r&0&0\\0&0&0\end{pmatrix}\exp\begin{pmatrix} 0&0&-a_r\\ 0&0&0\\ a_r&0&0\end{pmatrix}.
\end{equation}
We call $a_r$ the discrete spherical arc-length invariant, and $k_r$ the discrete spherical curvature of the polygon.
\end{theorem}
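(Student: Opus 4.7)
The plan is to identify $K_r = \rho_{r+1}\rho_r^{-1}$ as the product of an element of the isotropy subgroup $H = \SO(2)$ and the section element $\s(-a_r e_1)$, and then rewrite both factors as the stated exponentials.

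First I would show, directly from the normalization equations $\rho_r\cdot x_{r+1} = a_r e_1$ and $\rho_{r+1}\cdot x_{r+1}=o$, that
\[
K_r\cdot(a_r e_1) \;=\; \rho_{r+1}\rho_r^{-1}\cdot(\rho_r\cdot x_{r+1}) \;=\; \rho_{r+1}\cdot x_{r+1} \;=\; o.
\]
Next I would produce a distinguished element of $\SO(3)$ that also sends $a_r e_1$ to $o$: since $\s(y) = \exp\begin{pmatrix}0 & y\\ -y^T & 0\end{pmatrix}$ is an exponential, $\s(-y) = \s(y)^{-1}$, and the section property $\s(a_r e_1)\cdot o = a_r e_1$ (an immediate consequence of $\s(o)=e$ and the defining relation $g\,\s(x) = \s(g\cdot x)\,h$ of the action in terms of the section) gives $\s(-a_r e_1)\cdot(a_r e_1) = o$. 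Consequently $K_r\,\s(-a_r e_1)^{-1}$ fixes $o$, so lies in $H$.

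Reading off the factorization (\ref{gson+1}), the isotropy subgroup $H$ consists exactly of matrices $\begin{pmatrix}\Theta & 0\\ 0 & 1\end{pmatrix}$ with $\Theta\in\SO(2)$, and such a matrix equals $\exp\!\begin{pmatrix}0 & k_r & 0\\ -k_r & 0 & 0\\ 0 & 0 & 0\end{pmatrix}$ for some locally unique angle $k_r$. Unpacking $\s(-a_r e_1) = \exp\!\begin{pmatrix}0 & 0 & -a_r\\ 0 & 0 & 0\\ a_r & 0 & 0\end{pmatrix}$ straight from the definition of $\s$ then yields
\[
K_r = \exp\!\begin{pmatrix}0 & k_r & 0\\ -k_r & 0 & 0\\ 0 & 0 & 0\end{pmatrix}\exp\!\begin{pmatrix}0 & 0 & -a_r\\ 0 & 0 & 0\\ a_r & 0 & 0\end{pmatrix},
\]
which is precisely (\ref{sphereKr}).

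I do not expect a serious obstacle; the key observation is that the downstream normalization $\rho_{r+1}\cdot x_{r+1}=o$ forces the $\m$-part of $K_r$ to equal the section element determined by $-a_r e_1$, leaving only an $\SO(2)$ rotation about the pole to be absorbed into the $k_r$ factor. The geometric names are then immediate: $a_r = \norm{a_r e_1}$ equals the spherical distance from $x_r$ to $x_{r+1}$ because $\rho_r$ is an isometry, and $k_r$ is the rotation about $o$ needed to align the normalized frames at $r$ and $r+1$, which is the discrete analogue of the spherical geodesic curvature.
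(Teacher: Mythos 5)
Your proof is correct, and it takes a genuinely different route from the paper's. Both arguments begin from the same fact, namely $K_r\cdot(a_re_1)=\rho_{r+1}\cdot x_{r+1}=o$, which the paper phrases as the recursion relation (\ref{recursion}) in the form $K_r\cdot c_{r+1}^r=c_{r+1}^{r+1}$. From there the paper works in coordinates: it writes $K_r=g(\Upsilon,\kappa)$ as in (\ref{gson+1}), substitutes into the explicit action formula (\ref{soact1}), and solves the resulting trigonometric identity $0=\sin_{a_r}a_re_1+X_r\kappa$ for $X_r$, reducing it to $\sin\left(\frac{\sin a_r}{X_r}-a_r\right)=0$ and concluding $\kappa=-a_re_1$. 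You instead observe that $\s(-a_re_1)=\s(a_re_1)^{-1}$ already carries $a_re_1$ to $o$ (via the section property $\s(y)\cdot o=y$), so that $K_r\,\s(-a_re_1)^{-1}$ stabilizes $o$ and hence lies in $H$; since $H$ consists of the block matrices $\mathrm{diag}(\Theta,1)$ with $\Theta\in\SO(2)$, the stated product of exponentials follows. Your argument is coordinate-free, avoids the trigonometric manipulation entirely, and in particular sidesteps the multivaluedness of the equation $\sin\left(\frac{\sin a_r}{X_r}-a_r\right)=0$, which the paper resolves only implicitly by locality. What the paper's computation buys is familiarity with the explicit formula (\ref{soact1}), which it immediately reuses to identify $a_r$ and $k_r$ geometrically in the following theorem; your approach defers all of that. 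One point worth making explicit: the identity $\s(y)\cdot o=y$ rests on the local uniqueness of the factorization $g=\s(y)h$ with $h\in H$ near the identity, applied to the relation $g\,\s(x)=\s(g\cdot x)h(x,g)$ at $x=o$ --- this is indeed immediate, but it is the load-bearing step of your argument and deserves a sentence.
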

\begin{proof}
Let us call $K_r = g(\Upsilon, \kappa)$ as in (\ref{gson+1}). To prove the theorem we can use the recursion equations (\ref{recursion}) for $K_r$ to determine it. For a right matrix the equations are given by $K_r \cdot c_{r+1}^r = c_{r+1}^{r+1}$, which, when substituted in (\ref{soact1}), and after minor simplifications, becomes
\[
0= \sin_{a_r} a_r e_1 + \left(\cos_{\kappa}\sin_{a_r}a_r \kappa^Te_1 + \sin_{\kappa}\cos{a_r}\right) \kappa.
\]
Let us re-write this as $0= \sin_{a_r} a_r e_1 + X_r \kappa$ so that $\kappa = -\frac{\sin_{a_r} a_r}{X_r} e_1$. We can in fact use the expression for $X_r$ to solve for it. Indeed, again after minor simplifications we obtain
\[
X_r = \cos_{\kappa}\sin_{a_r}a_r \kappa^Te_1 + \sin_{\kappa}\cos{a_r} = X_r\left(1-\cos\left(\frac{\sin a_r}{X_r}\right) + \frac{\cos a_r}{\sin a_r}\sin\left(\frac{\sin a_r}{X_r}\right)\right).
\]
This results in the relation
\[
\sin\left(\frac{\sin a_r}{X_r}\right)\cos a_r - \cos\left(\frac{\sin a_r}{X_r}\right)\sin a_r = \sin\left(\frac{\sin a_r}{X_r} - a_r\right) = 0
\]
which results in $X_r = \sin_{a_r}$ and $\kappa = -a_r e_1$.

This calculation determines the $\m$ component of $K_r$ as in the statement of the theorem. Notice that the $\h$ component $\Upsilon$ depends on only one parameter and therefore we can write it as in the statement, even if we do not give its explicit formula.
\end{proof}

Both invariants $a_r$ and $k_r$ have a very simple geometric description, as shown in our next theorem.
\begin{theorem} The spherical arc-length invariant $a_r$ is the length of the arc joining $x_r$ to $x_{r+1}$. Let $\beta_{r,s}$ be the counterclockwise angle formed by the arc $x_rx_{s}$ and the arc $x_s N$, where $N$ is the north pole. Then,  
\[
k_{r-1} = \pi -(\beta_{r,r+1}+\beta_{r+1,r+2})
\]
\end{theorem}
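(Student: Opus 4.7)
The proof has two parts. For the arc-length identity, I would apply the action formula (\ref{soact2}) to $\rho_r = g(\Theta_r, -x_r)$ with input $x = x_{r+1}$. The normalization $\rho_r\cdot x_{r+1} = a_r e_1$ gives $\cos(\norm{\rho_r\cdot x_{r+1}}) = \cos a_r$, so after substituting the polar expression $x_s = \norm{x_s}(\cos\theta_s,\sin\theta_s)^T$ the identity collapses to
\[
\cos a_r = \cos\norm{x_r}\cos\norm{x_{r+1}} + \sin\norm{x_r}\sin\norm{x_{r+1}}\cos(\theta_r-\theta_{r+1}).
\]
Since $\norm{x_s}$ is the polar angle of $p_s$ from $N$ and $\theta_r-\theta_{r+1}$ is the angle at $N$ of the spherical triangle $Np_rp_{r+1}$, this is precisely the spherical law of cosines, and hence $a_r$ equals the spherical distance $d(p_r,p_{r+1})$.

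For the curvature, the plan is to extract the geometric content of $K_{r-1} = \rho_r\rho_{r-1}^{-1}$ from the explicit factorisation (\ref{sphereKr}). Because $\rho_{r-1}^{-1}\cdot N = p_{r-1}$ we have $K_{r-1}\cdot N = \rho_r\cdot p_{r-1}$; performing the matrix multiplication on $(0,0,1)^T$ through the two factors of (\ref{sphereKr}) yields a sphere point with polar angle $a_{r-1}$ and azimuth $\pi-k_{r-1}$, read in the $\rho_r$-frame. Since the $\rho_r$-frame is designed so that $p_r$ sits at $N$ and $p_{r+1}$ lies on the meridian of azimuth $0$, this identifies $\pi - k_{r-1}$ with the counterclockwise interior angle at $p_r$ between the arcs $p_r p_{r+1}$ and $p_r p_{r-1}$; equivalently, $k_{r-1}$ is the turning angle of the polygon at the vertex associated with the factorisation (\ref{sphereKr}).

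The last step is to express this turning angle as the combination of meridian-angles $\beta$ appearing in the statement. The natural tool is spherical trigonometry applied to two spherical triangles sharing a common arc to $N$: each $\beta_{\cdot,\cdot}$ becomes an angle of such a triangle at the vertex it is attached to, and the two contributions can be added after propagating the directional information along the common arc via the spherical law of sines. This reduces the identification $k_{r-1}=\pi-(\beta_{r,r+1}+\beta_{r+1,r+2})$ to a trigonometric cancellation using Napier's analogies in the two triangles that connect the vertices $p_r,p_{r+1},p_{r+2}$ via $N$.

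The hardest step will be this last conversion. Three sign conventions must be reconciled simultaneously: the handedness of the azimuth $\theta$ fixed by the section $\s$, the counterclockwise convention defining $\beta_{r,s}$, and the sign of $k_r$ imposed by the order of the factors in (\ref{sphereKr}). One must verify that the matrix calculation produces exactly the pair of $\beta$'s in the statement (rather than their supplements or reflections), and that the apparent index shift between $k_{r-1}$ and the vertices $p_{r+1},p_{r+2}$ is correctly explained by the trigonometric identity linking the triangle $p_rp_{r+1}N$ to $p_{r+1}p_{r+2}N$ through their common arc to $N$. Once the conventions are pinned down, the remaining verification is routine.
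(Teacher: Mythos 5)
Your arc-length argument is correct and, if anything, more direct than the paper's: the paper computes the squared norm of the vector in (\ref{ar}), works with $\sin^2 a_r$, and then removes the sign ambiguity, whereas reading $\cos(\norm{\eta})$ off (\ref{soact2}) with $y=-x_r$, $x=x_{r+1}$ produces the spherical law of cosines in one line. The first half of your curvature argument is also sound and is a genuine simplification of the paper's route: pushing $(0,0,1)^T$ through the two factors of (\ref{sphereKr}) gives $(-\sin a_{r-1}\cos k_{r-1},\ \sin a_{r-1}\sin k_{r-1},\ \cos a_{r-1})^T$, and since $\rho_r\in\SO(3)$ is an isometry sending $p_r$ to the pole and $p_{r+1}$ to azimuth $0$, the azimuth $\pi-k_{r-1}$ of $\rho_r\cdot x_{r-1}$ is exactly the counterclockwise angle at $p_r$ from the arc $p_rp_{r+1}$ to the arc $p_rp_{r-1}$. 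The paper reaches the corresponding fact by writing everything in the flat coordinates, computing the vectors $V_{i,j}$ and $v_{i,j}$ explicitly, and invoking the spherical law of sines in the triangles $Np_ip_j$ to recognize their polar angles as $\beta$'s; your isometry argument makes all of that unnecessary.

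The gap is in your final step, and it is not merely a matter of pinning down conventions. Once $\pi-k_{r-1}$ is identified as the interior angle at the \emph{single} vertex $p_r$, the only remaining move is to split that angle by the meridian $p_rN$ into the two angles it makes with the arcs $p_rp_{r-1}$ and $p_rp_{r+1}$ --- in the notation of the statement these are $\beta_{r-1,r}$ and $\beta_{r+1,r}$, both based at $p_r$, and no law of sines or Napier's analogies is required. What cannot succeed is your plan to convert this interior angle into the sum $\beta_{r,r+1}+\beta_{r+1,r+2}$ of angles based at $p_{r+1}$ and $p_{r+2}$ by ``propagating along the common arc to $N$'': those are angles of different spherical triangles at different vertices, determined by different data, and no trigonometric identity equates the interior angle at $p_r$ with their sum. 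The mismatch you are trying to trigonometrize away is an indexing inconsistency in the source itself: Figure 2 labels the two angles at $p_{r+1}$ as $\beta_{r,r+1}$ and $\beta_{r+2,r+1}$, and the geometrically correct assertion --- which your computation delivers, and which the paper's own argument delivers once its recursion is written as $K_{s+1}\cdot c_{s+1}^{s+1}=c_{s+1}^{s+2}$ and its claim $\beta_{r+3,r+2}=-\beta_{r+2,r+3}$ is read as a statement about a single vertex --- is $\pi-k_s=\beta_{s,s+1}+\beta_{s+2,s+1}$, the interior angle at $p_{s+1}$. You should finish by proving that version (the meridian decomposition at one vertex) and flagging the index discrepancy with the displayed formula, rather than searching for a cross-vertex identity that does not exist.
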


\begin{figure}[tbh]
\centering
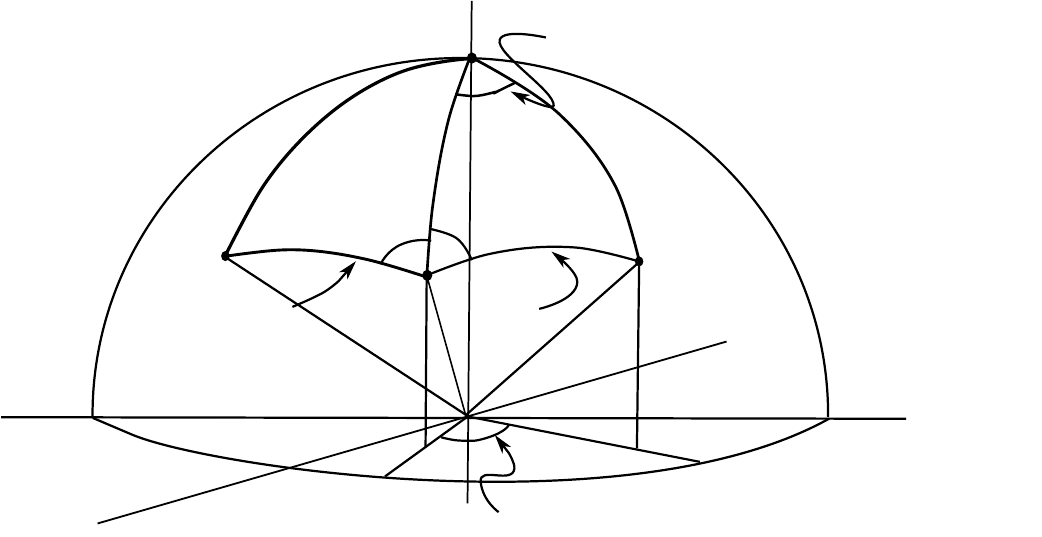\caption{Description of the discrete curvature. Notice that when $\beta_{r+1,r+2} + \beta_{r,r+1} = \pi$ the curvature is zero.}
\end{figure}

\begin{proof}
Calculating the length of the vector in equation (\ref{ar}) we see that 
\[
\sin^2(a_r)= \sinro^2+\left((\cosr-1)\sinro\cos\theta_{r,r+1}-\sinr\cosro\right)^2
\]
where $\theta_{r,r+1} = \theta_r-\theta_{r+1}$ and $\theta_r$ is the polar angle of $x_r$. After some simplifications this equality becomes
\[
\sin^2(a_r) = 1-\left(\cosr\cosro+\sinr\sinro\cos\theta_{r,r+1}\right)^2.
\]
 Using the spherical law of cosines (see figure 3) we obtain that 
 \[
 \cosr\cosro+\sinr\sinro\cos(\theta_{r,r+1}) = \cos \alpha_{r,r+1},
 \]
  where $\alpha_{r,r+1}$ is the angle between $p_r$ and $p_{r+1}$, the points on the sphere corresponding to $x_r$ and $x_{r+1}$. Using this fact, the above becomes
 \[
 -\cos^2\alpha_{r,r+1} + 1 = \sin^2\alpha_{r,r+1} = \sin^2a_r.
 \]
 Therefore, $a_r = |\alpha_{r,r+1}|$ since it is a positive number, as stated. 
 
 \begin{figure}[tbh]
\centering
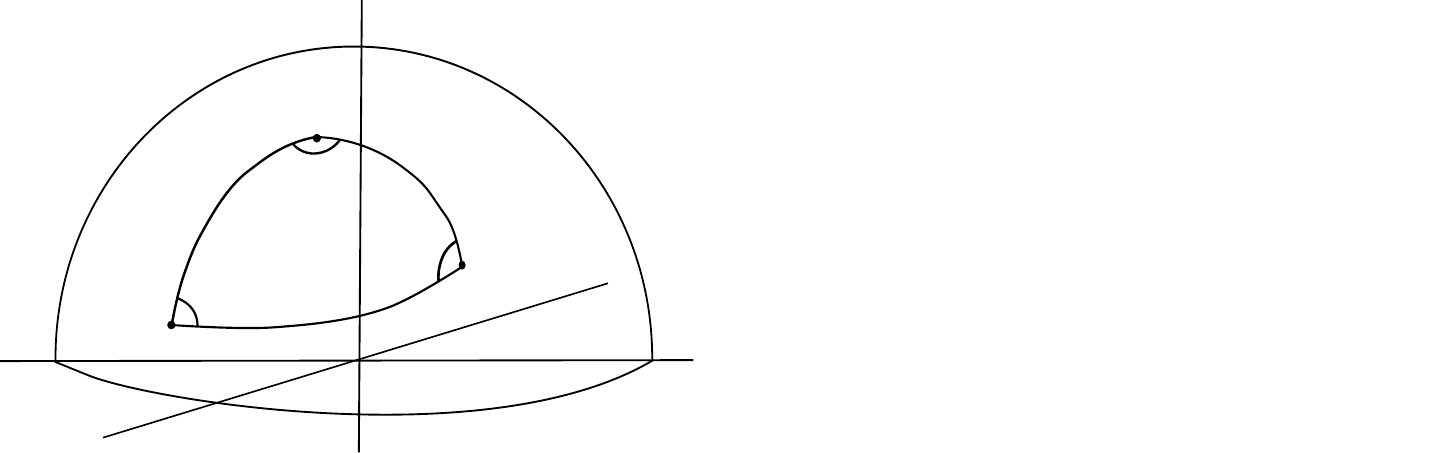\caption{Spherical trigonometric law of cosines and sines. The angles $A$, $B$ and $C$ are the corner angles while $a,b,c$ are the angles corresponding to the indicated arc as measured from the center of the sphere.}
\end{figure}

A similar calculation, although a more involved one, describes $k_r$. Using the recursion relation
\[
K_r\cdot c_{r+1}^{r+1} = c_{r+1}^{r+2},
\] 
the fact that $c_{r+1}^{r+1} = 0$, the expression of $K_r$ and the formula (\ref{soact1}) for the action, we have that
\[
\sin_{c_{r+1}^{r+2}}c_{r+1}^{r+2} = \sin a_r\begin{pmatrix}-\cos k_r\\ \sin k_r\end{pmatrix}.
\]
Therefore, {\it $\pi-k_r$ is the polar angle of $c_{r+1}^{r+2}$} (while $\norm{c_{r+1}^{r+2}} = a_r$). Now, we do know that $\rho_{r+2}\cdot x_{r+1} = c_{r+1}^{r+2}$, where $\rho$ is the right moving frame; therefore, we can again use (\ref{soact1}) to find
\begin{equation}\label{veq1}
\sin_{c_{r+1}^{r+2}}c_{r+1}^{r+2} = \Theta_{r+2} V_{r+1,r+2}
\end{equation}
where
\[
V_{i,j} = \sin_{x_{i}}x_{i} + \left(\cos_{x_{j}}\sin_{x_{i}} x_{j}\cdot x_{i} - \sin_{x_{j}}\cos\norm{x_i}\right)x_{j}.
\]
We clearly see that the polar angle of $c_{r+1}^{r+2}$ is a combination of the polar angle of $\Theta_{r+2}$ and the one of $V_{r+1,r+2}$. Now, we know that $\Theta_r$ is determined by the equation (\ref{ar}). That is, $\Theta_{r+2}$ rotates $V_{r+3,r+2}$ to the $x$-axis
\[
\Theta_{r+2}\left( \sin_{x_{r+3}}x_{r+3} + \left(\cos_{x_{r+2}}\sin_{x_{r+3}} x_{r+2}\cdot x_{r+3} - \sin_{x_{r+2}}\cos\norm{x_{r+3}}\right)x_{r+2}\right) 
\]
\[
= \Theta_{r+2}V_{r+3,r+2} =\sin_{a_{r+2}e_1}a_{r+1}e_1 .
\]
If we denote by $P(v)$ the polar angle of the vector $v$, from here we get that $\pi-k_r = P(V_{r+1,r+2}) - P(V_{r+3,r+2})$.

Next, we can write $x_r$ in terms of the polar coordinates of $p_r$, that is, 
\[
x_r = \begin{pmatrix} \sinr \cos\theta_r\\ \sinr\sin\theta_r\end{pmatrix}
\]
and substitute in the expression for $V_{i,j}$. After some minor trigonometric manipulations we obtain
\[
V_{i,j} = \begin{pmatrix} \cos \theta_j&-\sin\theta_j\\ \sin\theta_j&\cos\theta_j\end{pmatrix}\begin{pmatrix}\cos\norm{x_j}\sin\norm{x_i}\cos(\theta_{i,j})-\sin\norm{x_j}\cos\norm{x_i}\\ \sin\norm{x_i}\sin(\theta_{i,j})\end{pmatrix}
\]
Form here $P(V_{r+1,r+2}) - P(V_{r+3,r+2}) = P(v_{r+1,r+2}) - P(v_{r+3,r+2})$, where
\begin{equation}\label{vij}
v_{i,j} = \begin{pmatrix}\cos\norm{x_j}\sin\norm{x_i}\cos(\theta_{i,j})-\sin\norm{x_j}\cos\norm{x_i}\\ \sin\norm{x_i}\sin(\theta_{i,j})\end{pmatrix}.
\end{equation}
Notice that, as before, after basic trigonometric manipulations
\[
\norm{v_{i,j}}^2 =  1-\left(\cos(\norm{x_i})\cos(\norm{x_j})+\sin(\norm{x_i})\sin(\norm{x_j})\cos(\theta_{i,j})\right)^2 = 1-\cos^2\alpha_{i,j}= \sin^2\alpha_{i,j}.
\]
where we have used the spherical law of cosines once more. Notice that $0\le \alpha_{i,j}\le \pi$ and so from (\ref{vij})  the sine of the polar angle of $v_{i,j}$ is given by
\[
\frac{\sin\norm{x_i}\sin(\theta_{i,j})}{\sin\alpha_{i,j}}.
\]
Let us assume all the angles involved are in the first two quadrants. Then, the spherical law of sines tells us that
\[
\frac{\sin\alpha_{r+1,r+2}}{\sin\theta_{r+1,r+2}} = \frac{\sinr}{\sin\beta_{r+1,r+2}}
\]
and so the sine of the polar angle of $v_{r+1,r+2}$ is given by 
\[
\frac{\sin\norm{x_{r+1}}\sin(\theta_{r+1,r+2})}{\sin\alpha_{r+1,r+2}} = \sin\beta_{r+1,r+2}.
\]
Likewise, the polar angle of $v_{r+3,r+2}$ is $\beta_{r+3,r+2} = -\beta_{r+2,r+3}$. Therefore, 
\[
\pi - k_r = \beta_{r+1,r+2} +\beta_{r+2,r+3}
\]
as stated.
\end{proof}
 \subsubsection{Invariantization of invariant evolutions and its associated integrable system}
We will next describe the evolution induced on $K_r$ by an invariant evolution of polygons on the sphere. According to (\ref{invev}) and using our definition of moving frame, the most general form for an invariant evolution of polygons on the sphere is given by
\[
(x_s)_ t = \left( \sin_{x_s}^{-1} \left(I - \frac{x_s x_s^T}{\norm {x_s}^2}\right) + \frac{x_sx_s^T}{\norm {x_s}^2}\right) \Theta_s^{-1} \r_s
\]
for some invariant vector $\r_s = \begin{pmatrix} r_s^1\\ r_s^2\end{pmatrix}$ depending on $(k_r)$ and $(a_r)$.

\begin{theorem} Assume the twisted polygons $(x_r)$ are solutions of an invariant evolution of the form above. Then the invariants $a_s$ and $k_s$ evolve following the equations
\begin{eqnarray}\label{aev}
(a_s)_t &=& -r_s^1+r_{s+1}^1\cos k_s-r_{s+1}^2\sin k_s\\
\label{kev}
(k_s)_t &=& r_s^2\frac1 {\sin a_s}-r_{s+1}^2\frac{\cos a_{s+1}}{\sin a_{s+1}}+r_{s+2}^1 \frac{\sin k_{s+1}}{\sin a_{s+1}} + r_{s+2}^2\frac{\cos k_{s+1}}{\sin a_{s+1}} \\ &-&   r_{s+1}^1\frac{\cos a_s}{\sin a_s} \sin k_s- r_{s+1}^2\frac{\cos a_s}{\sin a_s}\cos k_s 
\end{eqnarray}
\end{theorem}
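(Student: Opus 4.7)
The plan is to apply the structure equation of Theorem \ref{structeq}, adapted from the left-frame statement there to the right frame $\rho_s$ used throughout this section, and then read off the invariant evolutions by matching matrix entries. Concretely, setting $\widehat\rho_s = \rho_s^{-1}$ and using the factorization $\widehat\rho_s = \s(x_s)\widehat\rho_s^H$ with $\widehat\rho_s^H = \mathrm{diag}(\Theta_s^{-1},1)\in H$, a short direct computation gives
\[
(K_s)_t = K_sN_s - N_{s+1}K_s, \qquad N_s := \widehat\rho_s^{-1}(\widehat\rho_s)_t \in \so(3),
\]
where $K_s = \rho_{s+1}\rho_s^{-1}$ is the right Maurer--Cartan matrix from (\ref{sphereKr}). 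Condition (\ref{Ncond}) of Theorem \ref{structeq} forces $N_s^\m = \mathrm{T}\s(o)\,\r_s$, which in the chosen splitting is the antisymmetric $3\times 3$ matrix with $\r_s$ in the first two entries of the third column; the only remaining freedom is in the $\h$-component, which is a single scalar. We therefore parameterize
\[
N_s = \begin{pmatrix} 0 & \omega_s & r_s^1\\ -\omega_s & 0 & r_s^2\\ -r_s^1 & -r_s^2 & 0\end{pmatrix},
\]
with $\omega_s$ a scalar invariant still to be determined.

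Next I would expand $K_s$ explicitly as the $3\times 3$ matrix whose entries are built from $\cos k_s, \sin k_s, \cos a_s, \sin a_s$ via (\ref{sphereKr}), compute $(K_s)_t$ by direct differentiation (linear in $(k_s)_t$ and $(a_s)_t$), and compute $K_sN_s - N_{s+1}K_s$ by straightforward matrix multiplication. Since $(K_s)_t\in T_{K_s}SO(3)$ carries only three independent scalar equations and there are exactly three unknowns $\{(a_s)_t,(k_s)_t,\omega_s\}$ per site, the system is square.

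The cleanest three equations arise from the third row. The $(3,3)$-entry immediately gives
\[
(a_s)_t = -r_s^1 + r_{s+1}^1\cos k_s - r_{s+1}^2\sin k_s,
\]
which is (\ref{aev}); the $(3,2)$-entry solves algebraically for
\[
\omega_s = \frac{\cos a_s\, r_s^2 - r_{s+1}^1\sin k_s - r_{s+1}^2\cos k_s}{\sin a_s};
\]
and finally the $(1,2)$-entry produces $(k_s)_t = \cos a_s\,\omega_s + \sin a_s\,r_s^2 - \omega_{s+1}$. Substituting the expressions for $\omega_s$ and $\omega_{s+1}$ and using the identity $\cos^2 a_s/\sin a_s + \sin a_s = 1/\sin a_s$ reduces this to exactly (\ref{kev}). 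The remaining entries $(1,1),(1,3),(2,1),(2,2),(2,3),(3,1)$ are either scalar multiples of these or algebraic consequences via the antisymmetry of $(K_s)_tK_s^{-1}\in\so(3)$, so no further computation is needed. The only real obstacle is the careful trigonometric bookkeeping in the matrix products; there is no conceptual subtlety beyond recognising that the single $\h$-unknown $\omega_s$ is fixed by one entry of the third row, after which $(k_s)_t$ drops out of any one of the $(1,\cdot)$ or $(2,\cdot)$ equations.
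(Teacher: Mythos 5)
Your proposal is correct and follows essentially the same route as the paper: both apply the structure equation of Theorem \ref{structeq} (transcribed to the frame conventions of the sphere example) with $N_s$ having prescribed $\m$-part $\mathrm{T}\s(o)\r_s$ and one unknown $\h$-scalar, then use one matrix entry to read off $(a_s)_t$, another to solve for that scalar, and a third to obtain $(k_s)_t$ after substitution. The only difference is cosmetic bookkeeping — you differentiate the explicit $3\times 3$ form of the right Maurer--Cartan matrix entrywise, while the paper works with $\widehat K_s^{-1}(\widehat K_s)_t\in\so(3)$ using the fact that the $\m$-factor of $K_s$ commutes with its $t$-derivative — and the resulting three scalar relations, including the formula for $\omega_s=n_s$, coincide exactly.
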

\begin{proof}
Assume the polygons are evolving following this evolution; then, our {\it left} Maurer--Cartan matrices $\widehat K_s  = K_s^{-1}$ will follow the evolution
\begin{equation}\label{structsphere}
\widehat K_s^{-1} (\widehat K_s)_t = N_{s+1} - \widehat K_s^{-1} N_s \widehat K_s
\end{equation}
where 
\[
N_s = (\rho_s)_t\rho_s^{-1} = \begin{pmatrix} 0&n_s & r_s^1\\ -n_s & 0 & r_s^2\\ -r_s^1 &- r_s^2 & 0\end{pmatrix}.
\]

Next, we remark that $\begin{pmatrix} 0&-\kappa\\ \kappa^T & 0\end{pmatrix}$ commutes with its derivative since $\kappa$ is a multiple of $e_1$. This means 
\[
\left(\exp\begin{pmatrix} 0&-\kappa\\ \kappa^T & 0\end{pmatrix}\right)_t = \begin{pmatrix}0&0&(a_s)_t\\ 0&0&0\\ -(a_s)_t & 0&0\end{pmatrix}\exp\begin{pmatrix} 0&-\kappa\\ \kappa^T & 0\end{pmatrix}.
\]
Using the expression of $K_r$ in the previous theorem, one has that
\[
\widehat K_s^{-1}(\widehat K_s)_t  = \begin{pmatrix}0&-(k_s)_t&(a_s)_t\cos k_s\\ (k_s)_t&0&-(a_s)_t\sin k_s\\ -(a_s)_t\cos k_s&(a_s)_t\sin k_s&0\end{pmatrix}.
\]
Substituting this into (\ref{structsphere}) and multiplying out matrices we get
\[
 (k_s)_t = - n_{s+1} + n_s \cos a_s + r_s^2 \sin a_s
 \]
 and
 \[
 \begin{pmatrix} \cos k_s&\sin k_s\\-\sin k_s&\cos k_s\end{pmatrix} \begin{pmatrix} (a_s)_t\\ 0\end{pmatrix} =  \begin{pmatrix} \cos k_s&\sin k_s\\-\sin k_s&\cos k_s\end{pmatrix}\begin{pmatrix} - r_s^1\\-r_s^2\cos a_s+n_s\sin a_s\end{pmatrix} + \begin{pmatrix} r_{s+1}^1\\ r_{s+1}^2\end{pmatrix}.
 \]
 From here we get
 \[
 n_s = r_s^2\frac{\cos a_s}{\sin a_s}-\frac 1{\sin a_s}\left(r_{s+1}^1\sin k_s+ r_{s+1}^2\cos k_s\right)
 \]
and
\[
(a_s)_t = -r_s^1- r_{s+1}^2\sin k_s+r_{s+1}^1\cos k_s.
\]
Substituting the vale of $n_s$ in the evolution of $k_s$ completes the proof.
\end{proof}

In particular, if we ask that the evolution preserves the spherical arc-length (i.e. $(a_s)_t = 0$) and we choose the value $\tan a_s = 1$ as constant value, then choosing $r_s^1 = 1$ for all $s$ produces the curvature evolution
\[
(k_s)_t = \sqrt{2}\left(\frac{1-\cos k_{s+1}}{\sin k_{s+1}} - \frac{1-\cos k_{s-1}}{\sin k_{s-1}}\right). 
\]

Let $u_s=\frac{1-\cos k_{s}}{\sin k_{s}}$. The above equation becomes
\begin{eqnarray}\label{volm}
 (u_s)_t=\frac{\sqrt{2}}{2} (1+u_s^2) (u_{s+1}-u_{s-1}), 
\end{eqnarray}
which is a special case of equation (V1) in the list of integrable Volterra-type equations \cite{Yami}.
It is a bi-Hamiltonian equation \cite{mr93c:58096}, where the second Hamiltonian operator was not explicitly given. 
Here we give its compatible Hamiltonian and symplectic operators \cite{mr94j:58081}.
Equation (\ref{volm}) can be written as
\[(u_s)_t=\frac{\sqrt{2}}{4} \HH \delta_{u_s} \ln (1+u_s^2)\quad \mbox{and}\quad 
{\mathcal I} (u_s)_t=\frac{\sqrt{2}}{2}\delta_{u_s} ((1+u_s^2) u_{s+1} u_{s-1}+\frac{1}{2} u_s^2 u_{s-1}^2)
 \]
where
\begin{eqnarray*}
&&\HH=(1+u_s^2) \left( \T - \T^{-1} \right)(1+u_s^2) 
\end{eqnarray*}
is Hamiltonian and 
\begin{eqnarray*}
&&{\mathcal I}=\T-\T^{-1}+\frac{2u_s}{1+u_s^2}(\T-1)^{-1}(u_{s+1}+u_{s-1})
+(u_{s+1}+u_{s-1})\T(\T-1)^{-1}\frac{2u_s}{1+u_s^2}
\end{eqnarray*}
is a symplectic operator. Furthermore, these two operators satisfy
\begin{eqnarray*}
 \HH {\mathcal I}=\Re^2=\left((1+u_s^2) \T+2 u_s u_{s+1} +(1+u_s^2)\T^{-1}
+(u_s)_t (\T-1)^{-1} \frac{2u_s}{1+u_s^2}\right)^2,
\end{eqnarray*}
where $\Re$ is a Nijenhuis recursion operator of (\ref{volm}). 
Its recursion operator $\Re$ can not be written as
the product of weakly nonlocal Hamiltonian and symplectic operators. A similar example
was presented in \cite{wang09}.

Notice that, as before, $\T-1$ is not invertible in the periodic case. That means we will need to work on infinite gons, or assume we are working with Hamiltonians whose gradient is in the image of the operator, as it is the case with this example.

\section{Conclusion and Future Work}

In this paper, we have developed a notion of a discrete moving frame and shown it has computational advantages over a single frame for the invariantization of discrete evolution flows and mappings. 
Further, we have shown  in our examples  that the use of discrete moving frames greatly aids the identification of discrete integrable systems and biPoisson maps that can be obtained as invariantizations of invariant 
flows of gons. 

Investigations are under way on how the Hamiltonian structures might appear in the general case, and in particular how they appear in the projective plane. 
Further work also remains to illustrate and detail how our constructions of invariant mappings and their invariantizations lead in general to discrete integrable mappings, as seen in Section \ref{invmapaswell}. 
This work is different in nature to the previous one and it will be greatly aided by understanding the differential-difference problem. 

Hamiltonian structures  in the projective plane could be relevant to the results in \cite{OST}. There the authors proved that the so-called {\it pentagram} map is completely integrable, but they did not provide a biHamiltonian structure for its invariantization. A structure obtained in this setting would be a natural candidate to prove that the pentagram map is biPoisson.

The study of discrete moving frames for more general applications, and insight into how our theorems in Section 3 here may generalize, remains to be achieved. One such application is to the discrete Calculus of Variations, as briefly indicated
in Example \ref{disvarScalTransDisFrame}. We believe however, that eventually the applications for a more general theory of discrete moving frames will extend to invariant numerical schemes and computer graphics, amongst others.

\end{document}